\title{Inferring Quantum Network Topology using Local Measurements}
\author[1]{Daniel T. Chen}
\author[2]{Brian Doolittle}
\author[1]{Jeffrey Larson}
\author[1]{Zain H. Saleem}
\author[3]{Eric Chitambar}
\affil[1]{Mathematics and Computer Science Division, Argonne National Laboratory}
\affil[2]{Department of Physics, University of Illinois at Urbana-Champaign}
\affil[3]{Department of Electrical and Computer Engineering, University of Illinois at Urbana-Champaign}
\crefname{lemma}{lemma}{lemmas}
\Crefname{lemma}{Lemma}{Lemmas}
\crefname{assumption}{assumption}{assumptions}
\Crefname{assumption}{Assumption}{Assumptions}
\newcommand{\subassref}[2]{{\Cref{ass:#1}.\ref{subass:#2}}}
\newcommand{\mbb}{\mathbb}
\newcommand{\mc}{\mathcal}
\newcommand{\ip}[1]{\left\langle #1 \right\rangle} %
\newcommand{\op}[2]{\left|#1\right\rangle\!\left\langle #2\right|} %
\newcommand{\Net}{\text{Net}}
\newcommand{\rhoNet}{\rho^{\Net}}
\newcommand{\av}{\vec{a}}
\newcommand{\thetav}{\vec{\theta}}
\newtheorem{assumption}{Assumption}
\definecolor{cool_green}{rgb}{0.0, 0.5, 0.0}
\newcommand{\eric}[1]{{\color{cool_green} #1}}
\date{}
\begin{document}
\maketitle

\begin{abstract}
    Statistical correlations that can be generated across the nodes in a quantum network depend crucially on its topology. However, this topological information might not be known a priori, or it may need to be verified. In this paper, we propose an efficient protocol for distinguishing and inferring the topology of a quantum network. We leverage entropic quantities---namely, the von Neumann entropy and the measured mutual information---as well as measurement covariance to uniquely characterize the topology. We show that the entropic quantities are sufficient to distinguish two networks that prepare GHZ states. Moreover, if qubit measurements are available, both entropic quantities and covariance can be used to infer the network topology without state-preparation assumptions. We show that the protocol can be entirely robust to noise and can be implemented via quantum variational optimization. Numerical experiments on both classical simulators and quantum hardware show that covariance is generally more reliable for accurately and efficiently inferring the topology, whereas entropy-based methods are often better at identifying the absence of entanglement in the low-shot regime.
\end{abstract}

\section{Introduction}

Quantum entanglement~\cite{Horodecki2009_entanglement_review} is a static resource that can be shared between parties and used to generate correlations.
In nature, entanglement leads to interesting physical phenomena in quantum many-body systems~\cite{Amico2008_manybody_entanglement}, while from an engineering perspective, quantum entanglement offers operational advantages in communication, cryptography, computation, and sensing technologies~\cite{Dowling2003_quantum_tech_2}.

We focus on quantum communication networks in which entanglement can be generated and distributed to different parties~\cite{Kimble2008quantum_internet,Simon2017global_quantum_network,Wehner2018quantum_internet}. %
The entanglement can then be used to assist in a variety of networking applications~\cite{Wehner2018quantum_internet}.
The first quantum networks have been developed and will continue to scale~\cite{Sun2016teleportation,Hermans2022_delft_network_teleportation,Carvacho2022_urban_network_non-n-locality,Chung2022_illinois_express_quantum_metro_network}; therefore, to harness the power of quantum networks, it is important to characterize different types of entanglement structures.

The functionality of a quantum network will depend crucially on its \textit{topology}, which is the particular connectivity structure between the sources and the measurement devices.
While significant effort has been dedicated to detecting multipartite entanglement~\cite{Terhal2002detecting,Guhne2009entanglement_detection,Huber2010_high-dim_multipartite_entanglement,Harney2020_entanglement_classifying_neural_net,Luo2021_robust_multipartite_entanglement,Hansenne2022_quantum_network_symmetries,Scala2022_variational_learning_entanglement}, less is known about inferring a network's topology.
In practice, we need an efficient procedure for inferring network topology from the experimental data.
Furthermore, such a procedure must be compatible with existing quantum systems, meaning it does not rely upon quantum memory or complex multi-qubit measurements.

Recently, several techniques have been developed to assist in this task.
In most approaches, classical data is sampled from a network and tested for compatibility with a given network topology.
Numerous network compatibility tests have been developed, including violations of entropic bounds~\cite{Henson2014_entropic_bounds,chaves2015_entroptic_bounds,Weilenmann2017_entropic_bounds},  network Bell inequalities~\cite{Chaves2016_polynomial,Tavakoli2016tree,Rosset2016_nonlinear_bell_inequalities,Tavakoli2022_network_nonlocality}, and quantum Finner inequalities~\cite{Renou2019_correlation_limits,Luo2021network_configuration}, as well as semidefinite tests on covariance matrices~\cite{Kela2020_semidefinite_tests,Aberg2020_semidefinite_tests,Kraft2021_characterizing_quantum_networks} and inflation techniques~\cite{Wolfe2019_inflation,Wolfe2021_inflation}.

In a separate approach, the von Neumann entropy of independent measurement devices was used to infer whether two networks of GHZ states are equivalent under local unitary transformations~\cite{yang2022strong}.
{Since each qubit in a GHZ state has a maximally mixed reduced density matrix, an analysis of the local von Neumann entropy alone will fail to distinguish between two networks whose measurement nodes receive the same number of qubits (see  \Cref{fig:tri-net-2}).}
As a solution, Yang et al.~\cite{yang2022strong} proposed using the multipartite Shannon mutual information in addition to the von Neumann entropy, however, this solution might be challenging to scale to networks with {$n \gg 0$} nodes {since it requires computing the joint entropy of all $2^n$ subsets of parties.}

\begin{figure}
\begin{center}
        \includegraphics[width=.6\linewidth]{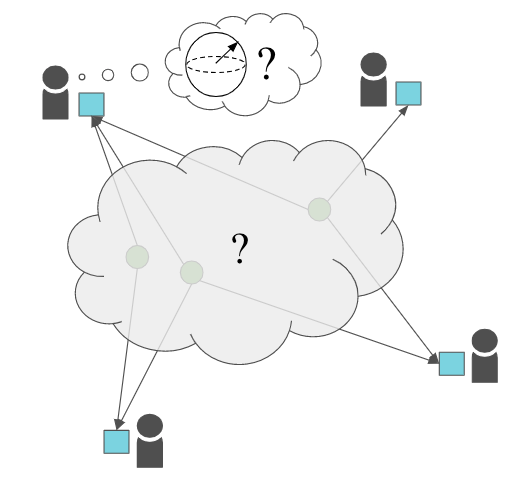}
\end{center}
    \caption{We consider the problem of inferring the topology (covered by the grey cloud) of a quantum network using measurement data on qubits received {from the sources.  In addition to not knowing the exact connectivity or whether all links are functioning properly, the senders and receivers may have misaligned Bloch spheres used for qubit encoding/decoding.} %
    Lastly, we also consider noise corruption when sending qubits via connecting links.}
    \label{fig:main-q}
\end{figure}

The main problem we consider is depicted in \Cref{fig:main-q}.  The network is assumed to have a two-layered structure, the first layer consisting of quantum sources, and the second layer consisting of measurement nodes.  However, the particular connectivity between these two layers is unknown and our goal is to identify this structure by studying the measurement statistics at the nodes.  Furthermore, we do not assume that all nodes on the network share the same reference frame for the encoded information, which means that arbitrary local unitary rotations are permitted at either the source or measurement layers.  As a final complication, we also consider the scenario of noisy connecting links.

We demonstrate two approaches to solving the above problem.
First, we show that networks of different topologies can be distinguished using only local measurements, albeit requiring trustworthy sources to prepare GHZ states.
Building upon Yang et al.~\cite{yang2022strong}, we use the von Neumann entropy to count the number of sources linked to each node, while we use the measured mutual information---the maximum mutual information observed between two parties using measurements local to each---between two nodes to count the number of sources they share.
We extend our analysis to the case where qubit-wise uniform depolarizing noise is applied to the sources. We show that the protocol can be useful for comparing network topologies when the noise strength is known, but limited in utility otherwise.
On the other hand, when each node measures one qubit, we show that a network's topology can be fully characterized also with measurements local to each node, and in a manner robust to noise (we place no assumptions on the noise model). 
Furthermore, our approach is practical because it scales quadratically with the number of qubits, uses only qubit measurements, and does not require a quantum memory.
We also demonstrate empirically a topology inference algorithm that can be implemented on quantum hardware using variational quantum optimization methods~\cite{Doolittle2023}.
The variational optimization approach improves inference capabilities by maximizing the observed correlations while allowing numerical estimations of the Von Neumann entropy and measured mutual information, which are entropic quantities that are otherwise expensive to compute.
We conduct numerical experiments on simulators and quantum hardware to contrast the choice of correlation measures, namely, mutual information and covariance. In general, covariance-based protocols are better at detecting entanglement structures while optimizing more efficiently. However, we also find that entropic methods tend to more reliably identify the absence of entanglement, particularly in the low-shot regime.

The organization of the paper is as follows. In \Cref{sec:prelim}, we formally introduce quantum networks and their relevant entropic quantities, while establishing the notation that we use throughout the manuscript. \Cref{sec:distinguish-network} introduces protocols for distinguishing the topology of two quantum networks. With state preparation assumptions, we show that two networks can be distinguished by observing entropic quantities admitted by individual/pairs of measurement nodes. We also discuss instances where the protocol breaks down under noisy channels. Then, \Cref{sec:variational-inference} briefly discusses variational optimization algorithms for estimating entropic quantities. Lastly, \Cref{sec:infer-network} extends our previous protocol to measurements on individual qubits, which gives a polynomial-time algorithm for inferring the topology of quantum networks. This algorithm does not depend on a priori knowledge of the prepared states and is robust to noise. We implement and test the algorithm on both simulator and quantum hardware to study the practical performance with respect to channel and statistical shot noise.

\section{Preliminary on quantum networks}\label{sec:prelim}

This section introduces formally $n$-local quantum networks. These networks contain several components (sources, nodes, and links) as well as important measurable quantities that can be used to infer the topology of the network.

\subsection{$n$-local quantum networks}
An $n$-local quantum network, the object of interest in this paper, is an $N_q$-qubit system, where $N_q$ denotes the number of qubits. 
Each qubit is indexed by an integer $k \in \{1, 2, \dots, N_q\} \equiv [N_q]$.
A quantum network is characterized by sources $\Lambda_i$, measurement nodes $A_j$, and links $L_k$.
Furthermore, we let $N$ denote the number of a quantity, for example, $N_s$ denotes the number of sources and $N_m$ the number of measurement nodes.

More specifically, we can concisely interpret quantum networks as a directed bipartite graph $G = (\{\bm \Lambda,\bm A\},\bm L)$. 
The vertices are partitioned into the sources $\bm \Lambda = \{ \Lambda_i\}_{i=1}^{N_s}$ and measurement nodes $\bm A = \{ A_j\}_{j=1}^{N_m}$. 
The edges connect sources to nodes, $\bm L = \{(\Lambda_i, A_j)\}$, and represent the movement of qubits. 
See \Cref{fig:network-diagram} for an example quantum network and an enumeration of its respective parts.

\begin{figure}
    \centering
    \includegraphics[width=\linewidth]{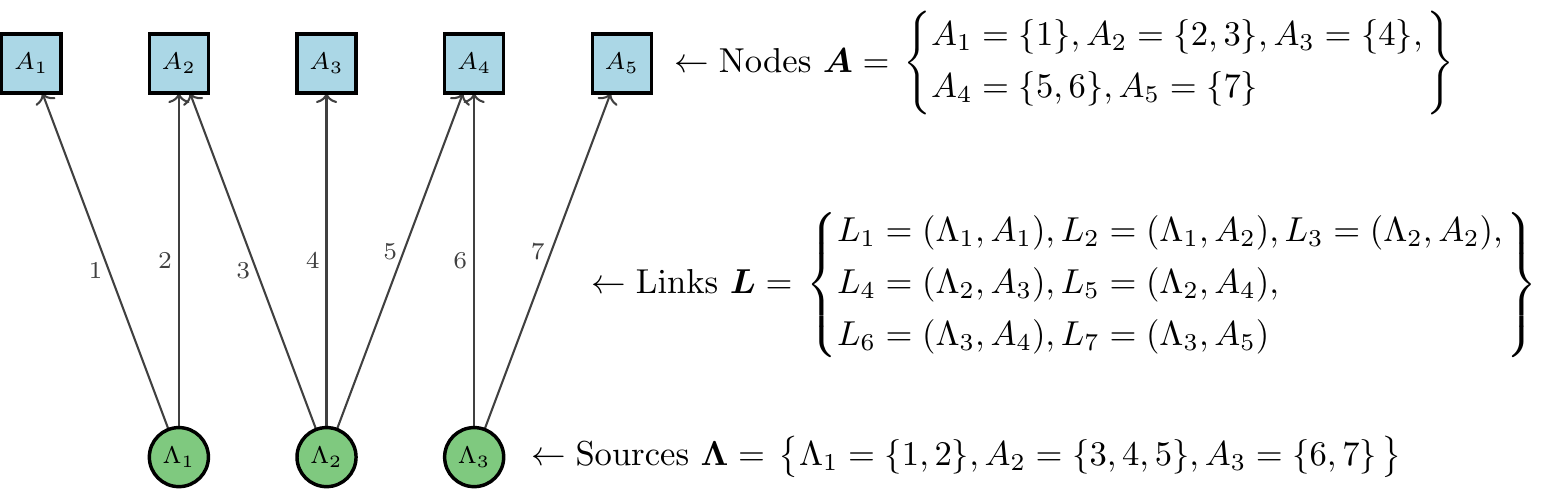}
    \caption{A quantum network is composed of sources (green circles), links (edges), and measurement nodes (blue squares). Each link sends one qubit from a source to a node. Viewing the nodes and sources jointly as the vertex set, a quantum network can be interpreted as a bipartite graph.}
    \label{fig:network-diagram}
\end{figure}

\paragraph{Sources} 
A source, indexed by an integer $i \in [N_s]$, is characterized by the subset of qubits it acts on, $\Lambda_i \subseteq [N_q]$. In a quantum network, $N_s$ sources collectively prepare a state $\ket{\psi} = \bigotimes_{i=1}^{N_s} \ket{\psi^{\Lambda_i}}$ where $\mathcal H$ is a $2^{N_q}$-dimensional Hilbert space and $\ket{\psi}\in \mc{H}$.
There are two frequently used states in the remainder of the manuscript (particularly in \Cref{sec:distinguish-network}). The first is the Greenberger-Horne-Zeilinger (GHZ) state, which takes the form
\begin{align}\label{eq:ghz}
    \ket{\text{GHZ}_n} = \frac{1}{\sqrt{2}} \left( \ket{00\dots 0} + \ket{11\dots1} \right)
\end{align}
where $n$ denotes the number of qubits. When $n = 2$, GHZ states are equivalent to Bell states, denoted by $\ket{\Phi}$. Moreover, partial traces of the GHZ state results in a shared classical random bit, denoted by $\sigma_n$, defined as
\begin{align}
    \sigma_n = (\ket{00\dots0}\bra{00\dots0} + \ket{11\dots1}\bra{11\dots1})/2.
\end{align}
This shared random bit is the other frequently used state. The density matrices of these two states differ by their off-diagonal entries.

\paragraph{Links}
Links $L_k$, for $k \in [N_q]$, 
can be represented graph-theoretically as an edge that connects a source node to a measurement node, $L_k = (\Lambda_i, A_j)$. 
Each link transmits exactly one qubit; hence, there are as many links as qubits.
Physically, they are modeled as a quantum, possibly noisy, channel.
A quantum channel is mathematically defined as a completely-positive trace-preserving (CPTP) map~\cite{Nielsen2009} $\mathcal E : D(\mc{H}) \to D(\mc{H})$, where $D(\mc H)$ denotes the space of density matrices of states in $\mc H$. 
Alternatively, the channel can be expressed in the operator-sum representation~\cite{Nielsen2009, Kraus1983}
\begin{equation}\label{eq:kraus_op_rep}
    \mathcal E(\rho) = \sum_i K_i \rho K_i^{\dagger}, \; \text{where}\;\sum_i K_i^{\dagger}K_i = \mbb{I},
\end{equation}
where $\{K_i\}$ are Kraus operators~\cite{Kraus1983}.
For example, the depolarizing channel for a one-qubit system $\rho$ with strength $\gamma$ has the following Kraus operators,
\begin{align}
    K_0 = \sqrt{1 - \frac{3\gamma}{4}} \mathbb I_2,\, &K_1 = \sqrt{\frac{\gamma}{4}}\sigma_x,\, \nonumber \\ 
    &K_2 = \sqrt{\frac{\gamma}{4}}\sigma_y,\, K_3 = \sqrt{\frac{\gamma}{4}}\sigma_z.
\end{align}
where $\mathbb I_2$ is a $2 \times 2$ identity matrix and $\sigma_y, \sigma_y, \sigma_z$ are the Pauli matrices.
Note, here we are assuming that the channel noise in each link acts independently of each other.

\paragraph{Measurement nodes}
Measurement nodes receive the incoming qubits and output the corresponding measurement outcomes. 
For a node $A_j \subseteq [N_q]$, $j \in [N_m]$, we consider a projection-valued measure (PVM) $\{\Pi^{A_j}_{a_j}\}$ that forms a set of orthogonal projectors satisfying $\sum_{a_j} \Pi^{A_j}_{a_j} = \mbb{I}_{2^{|A_j|}}$.
The node measures its local qubits $\rho^{A_j}\in D(\mc{H}^{A_j})$ that were received from its linked sources.
We assume measurement nodes are independent of one another, and the network applies the projector $\Pi_{\av} = \bigotimes_{j=1}^{N_m} \Pi^{A_j}_{a_j}$.
Upon measurement, the classical output $\av$ is obtained with probability,
\begin{equation}\label{eq:quantum-conditional-probabilities-born-rule}
    \mathbb P(\av) = \tr \left( \Pi_{\av} \,\mathcal E_{\text{tot}} (\rho_{\text{tot}}) \right),
\end{equation}
where $\rho_{\text{tot}}$ designates the total state generated by all the sources and $\mathcal{E}_{\text{tot}}$ is the joint channel across all edges. It is worth noting that any permutations needed to map the joint Hilbert space of sources to that of the measurement nodes are included implicitly.

\subsection{Entropic quantities on quantum networks}
The paper focuses on two entropic quantities observed on networks: the von Neumann entropy and the measured mutual information. 
Both quantities convey important information about the topology of the network and will be discussed in more detail in the next section. 

\paragraph{Von Neumann entropy} The von Neumann entropy for a quantum state $\rho$ is defined as
\begin{align}
    S(\rho) = \mbox{\eric{$-$}}\tr \left( \rho \log \rho \right)
\end{align}
where the $\log(\cdot)$ above refers to the matrix logarithm and we use the convention that $\log 0 = 0$. 
So, any pure state $\rho = \ket \psi \bra \psi$ has a von Neumann entropy of zero. Recall that the Shannon entropy of a probability distribution $\mu$ on support $\mathcal X$ is defined as 
\begin{align}
    H(\mu) = - \sum_{x \in \mathcal X} \mu(x) \log \mu(x).
\end{align}
When measured in the eigenbasis of $\rho$, the von Neumann entropy coincides with the Shannon entropy of the distribution over measurement outcomes~\cite{Nielsen2009}, with all randomness coming from the superposition of pure states in $\rho$~\cite{Nielsen2009}. 
When measured in any other basis, the Shannon entropy calculated from measurement results is strictly greater than the von Neumann entropy because measurements only add noise. 
Thus, the von Neumann entropy can be calculated by minimizing the Shannon entropy over the measurement basis, i.e.,
\begin{align}\label{eq:vn_entropy_as_min_shannon_entropy}
    S(\rho) = \min_{\{\Pi_{\vec a}\}} H(\mathbb P(\vec a))
\end{align}
where $\{\Pi_{\vec a}\}$ is a complete set of projections and $\mathbb P(\vec a)$ is the probability distribution upon measuring the quantum state in basis $\{\Pi_{\vec a}\}$.

\paragraph{Measured mutual information} Intuitively, the mutual information between two random variables quantifies the amount of correlation between them. However, the conventional mutual information defined for quantum systems involves joint measurement between the two parties. Let $A_i$ and $A_j$ be two measurement devices. We introduce the \textit{measured mutual information} as the maximal mutual information between local measurements distributions generated by $A_i$ and $A_j$,
\begin{equation}\label{eq:meas-mut-info}
\footnotesize
\begin{aligned}
    \mathcal I_m (A_i; A_j) = \max_{\{\Pi^{A_i}_{\vec a_i} \tensor \Pi^{A_j}_{\vec a_j}\}} &\left[H(\mathbb P (\vec a_i)) + H(\mathbb P (\vec a_j)) \right.\\&\left.- H(\mathbb P (\vec a_i, \vec a_j))\right]. 
\end{aligned}
\end{equation}

If the two measurement nodes are not correlated, then we can decompose the joint distribution into products. Furthermore, since the Shannon entropy of independent random variables is additive, the measured mutual information will go to zero if no correlation---quantum or classical---is shared. 

\section{Distinguishing network topology}\label{sec:distinguish-network}

We first are interested in protocols that can distinguish two network topologies. We define two networks to be the same if they are related by a graph isomorphism, formally defined below.

\begin{definition}[Network Topology] \label{def:isomorphism}
Two quantum networks, $\mathcal N \ind 1$ and $\mathcal N \ind 2$, have the same topology if there exists bijections $\phi: [N_s] \to [N_s], \varphi: [N_m] \to [N_m]$ such that for any edge $L_k \ind 1 = (\Lambda_i \ind 1, A_j \ind 1)$, there is a corresponding $L_k \ind 2 = (\Lambda_{i} \ind 2, A_{j} \ind 2) = (\Lambda_{\phi(i)} \ind 1, A_{\varphi(j)} \ind 1)$.
\end{definition}

Note that in the above definition, we gave two bijections $\phi$ and $\varphi$ separately for sources and measurement nodes. 
Conventionally, one bijective map is sufficient to describe the relabeling of vertices. 
In the context of quantum networks, the two maps are necessary to ensure sources and measurement nodes remain distinct.

Yang et al.~gave a protocol for distinguishing the topology of quantum networks using GHZ states~\cite{yang2022strong}. More specifically, they proved that the von Neumann entropies measured at each node are the same between two networks, up to a permutation of node indices, if and only if the topologies of the two networks are the same. However, this theorem only holds if no two measurement nodes share more than one source. We find this class of network restricting. In this section, we will introduce an alternative protocol that removes this restriction and distinguishes the topology of two networks where any pairs of nodes can share any number of sources.

\subsection{Topology classification using von Neumann entropy}\label{sec:topo-classification}
We begin by reviewing one of the results from Yang et al.~\cite{yang2022strong} that we will be extending. Consider an $n$-local quantum network $\mathcal N$ where no two nodes share more than one source. They constructed the \textit{characteristic vector} 
\begin{align}
    V_{\mathcal N} = \begin{pmatrix} S(A_1) & S(A_2) & \dots & S(A_{N_m}) \end{pmatrix}
\end{align}
to store the von Neumann entropy measured on each node. Then, a quantum network can be uniquely characterized by its characteristic vector.

\begin{lemma}[Theorem 6 of~\cite{yang2022strong}]\label{thm:Yang-thm}
Let $\mathcal N \ind 1$ and $\mathcal N \ind 2$ be two quantum networks preparing GHZ states and for any two parties $A_i$ and $A_j$, they share no more than one source, that is, 
\begin{align}
\footnotesize
    \biggr \vert \bigr \{\Lambda_k \in \bm \Lambda: (\Lambda_k, A_i), (\Lambda_k, A_j) \in \bm L,  i \neq j \bigr \} \biggr \vert \leq 1.
\end{align}
Then, $\mathcal N \ind 1$ and $\mathcal N \ind 2$ have the same topology if and only if their characteristic vectors are equal to each other.
\end{lemma}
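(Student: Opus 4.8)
The plan is to relate each measurement node's von Neumann entropy to the number of sources feeding it, and then use that count plus the single-source-overlap hypothesis to reconstruct the bipartite graph up to isomorphism. First I would observe that for a network preparing GHZ states, the reduced state on a node $A_j$ is a tensor product over the sources linked to $A_j$: each source $\Lambda_i$ with $(\Lambda_i,A_j)\in\bm L$ contributes, after tracing out the qubits sent elsewhere, either a pure GHZ state (if $\Lambda_i$ sends all its qubits to $A_j$) or a shared classical bit $\sigma_{|\Lambda_i\cap A_j|}$ (if it sends at least one qubit to a different node). Since $S(\sigma_n)=1$ for every $n\ge 1$ and $S(\mathrm{GHZ}_n)=0$, additivity of von Neumann entropy over tensor factors gives $S(A_j)=\#\{\Lambda_i : (\Lambda_i,A_j)\in\bm L,\ \Lambda_i\not\subseteq A_j\}$; under the natural nondegeneracy assumption that no source sends all its qubits to a single node (so every source is genuinely shared), this is exactly $\deg(A_j)$, the number of sources linked to $A_j$. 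So the characteristic vector records the degree sequence of the measurement side of the bipartite graph.

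Next I would handle the two directions. The ``only if'' direction is immediate: a graph isomorphism in the sense of \Cref{def:isomorphism} maps links to links and hence preserves $\deg(A_j)$ for each node, so after applying the permutation $\varphi$ the entropy values agree, and equality of characteristic vectors (as multisets, or after relabeling) follows. For the ``if'' direction I would argue that, given the restriction that any two nodes share at most one source, the bipartite graph is determined up to isomorphism by the data we can extract: the degrees $\deg(A_j)$ (from the entropies, as above) together with the pairwise shared-source indicators. The key combinatorial point is that a source $\Lambda_i$ of degree $d$ corresponds to a $d$-clique of nodes ``pairwise sharing a source,'' and because no two nodes share more than one source, distinct sources give edge-disjoint cliques; hence the hypergraph of sources is recoverable from its pairwise intersection pattern, and a bijection $\phi$ on sources compatible with $\varphi$ can be built. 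Since this is the statement being imported verbatim from \cite{yang2022strong}, I would actually cite that proof for the reconstruction step rather than reprove it, and only verify the entropy-to-degree translation carefully.

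The main obstacle is the ``if'' direction's reconstruction argument: knowing the node degrees alone is clearly insufficient (two non-isomorphic graphs can share a degree sequence), so one must be careful about what extra information the characteristic-vector equality is really being leveraged against. In Yang et al.'s setup the characteristic vector is paired with an implicit use of the pairwise-overlap structure, and the single-source-overlap hypothesis is exactly what makes the source hypergraph a ``linear hypergraph,'' so that its line graph (on the node set) determines it. I would therefore spend the bulk of the writeup making precise that (i) GHZ reduced states are classical bits with unit entropy, giving the degree reading, and (ii) the linear-hypergraph reconstruction is valid, deferring the detailed bijection-building to \cite{yang2022strong}. A secondary, purely bookkeeping obstacle is the permutation implicit in \eqref{eq:quantum-conditional-probabilities-born-rule} mapping source Hilbert spaces to node Hilbert spaces; I would note that this is harmless because von Neumann entropy is invariant under the local unitaries (including qubit permutations) that such relabelings induce.
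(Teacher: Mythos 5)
The paper never proves this statement: it is imported verbatim as Theorem~6 of \cite{yang2022strong}, so there is no in-paper argument to compare yours against. Judged on its own terms, your proposal has one correct and useful piece --- the entropy-to-degree translation (a GHZ source that keeps at least one qubit outside $A_j$ contributes a $\sigma_n$ factor of entropy $1$, so $S(A_j)$ counts the sources feeding $A_j$; this is exactly \Cref{thm:VNE-interpretation}) and the easy ``only if'' direction --- but the ``if'' direction as you sketch it does not prove the stated lemma. Your reconstruction step leans on ``the degrees \emph{together with the pairwise shared-source indicators},'' and then on the linear-hypergraph/line-graph argument. The problem is that the characteristic vector $V_{\mathcal N}$ contains only the single-node entropies; the pairwise overlap data is precisely what it does \emph{not} give you, and supplying it is the entire point of this paper's characteristic matrix $M_{\mathcal N}$ in \Cref{eq:char-mat} and of \Cref{thm:MMI-interpretation,thm:noiseless-classifier}. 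So your argument, if completed, would prove (a special case of) \Cref{thm:noiseless-classifier}, not \Cref{thm:Yang-thm}.

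The gap is not merely cosmetic: the degree sequence alone genuinely cannot single out the topology even under the at-most-one-shared-source hypothesis. Take six measurement nodes and Bell-pair sources arranged either as two disjoint triangles or as a single hexagonal cycle. Both networks satisfy the hypothesis, and every node has $S(A_j)=2$, so the characteristic vectors coincide, yet the bipartite graphs are not isomorphic in the sense of \Cref{def:isomorphism} (the pairwise measured mutual information \emph{does} separate them, consistent with \Cref{thm:noiseless-classifier}). This indicates that the lemma, read literally with $V_{\mathcal N}$ a vector of single-node entropies, needs either the joint entropies of subsets of nodes (as in the original formulation of \cite{yang2022strong}) or further structural restrictions; in any case your plan of ``verify the degree reading, defer the bijection-building to the citation'' cannot be carried out from the vector alone. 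If you intend to defer the hard direction to \cite{yang2022strong}, say so and stop there; if you intend to prove it, you must first identify what data beyond the node degrees the cited theorem actually consumes.
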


It would be helpful to establish a graphical interpretation of the von Neumann entropy. 
We want to first establish assumptions on the class of network to be considered for the remainder of the section, enumerated below.

\begin{assumption}\label{ass:all}
We assume:
\begin{enumerate}[label=\textnormal{\textbf{\Alph*}.},ref=\Alph*]
    \item Each source prepares maximally entangled states (GHZ states) up to local unitary transformations. Without loss of generality, assume states are prepared in the form specified in \Cref{eq:ghz} (see remark below). \label{subass:A1}
    \item Each source sends at most one qubit to any given measurement device.\label{subass:A2}
    \item Only measurements local to each measurement device can be performed.\label{subass:A3}
\end{enumerate}
\end{assumption}

\begin{remark}
\Cref{subass:A2} of \Cref{ass:all} was primarily to avoid ``parallel edges.'' Note that the statistics obtained from multiple maximally entangled qubits are identical to those obtained with just one qubit. Therefore, along with the assumption that each link represents the movement of a single qubit, having parallel edges will generate the same statistics at the measurement node and multiple networks can reproduce the same measurement outcome. However, if we allow multiple qubits from the same source to be transmitted in one link, we can relax \Cref{subass:A2} of \Cref{ass:all} at no cost to the protocol's correctness.
\end{remark}

\begin{remark}
As mentioned previously, the basis of choice at the sources can be different than the choice at measurement nodes. However, since calculating the von Neumann entropy and the measured mutual information requires optimization over basis sets at the end of each measurement node, the optimal basis will match the reference frame of the sources. Implementation of such a procedure can be achieved via differential programming (cf.~\Cref{sec:variational-inference} or  \cite{Doolittle2023}). Thus, for the remainder of the manuscript, the term ``computational basis'' would be used synonymously with ``the source's reference frame'' without loss of generality.
\end{remark}

Following \subassref{all}{A2}, we can interpret the von Neumann entropy as a graph-theoretic quantity.

\begin{lemma} \label{thm:VNE-interpretation}
Let a quantum network satisfy \Cref{ass:all}. Then, for any node $A_i$, $S(A_i) = N_s^{A_i}$, where $N_s^{A_i}$ denotes the number of sources $A_i$ is connected to.
\end{lemma}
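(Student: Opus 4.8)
The plan is to compute the reduced density matrix $\rho^{A_i}$ explicitly and read off its von Neumann entropy. First I would observe that, under \subassref{all}{A2}, the qubits arriving at node $A_i$ come from $N_s^{A_i}$ distinct sources, with exactly one qubit per source; moreover, by \subassref{all}{A1}, the global state is a tensor product $\bigotimes_k \ket{\mathrm{GHZ}_{n_k}}$ over the sources. Since the state factorizes across sources, the reduced state on $A_i$ factorizes as $\rho^{A_i} = \bigotimes_{k : \Lambda_k \ni \text{(qubit at }A_i)} \rho^{A_i \cap \Lambda_k}$, where each factor is the partial trace of a single GHZ state onto the one qubit that source $k$ sent to $A_i$. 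Because entropy is additive over tensor products, $S(A_i) = \sum_k S\big(\rho^{A_i \cap \Lambda_k}\big)$, so it suffices to show each single-qubit factor contributes exactly $1$.

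The key computational step is that tracing out all but one qubit of $\ket{\mathrm{GHZ}_n}$ yields the maximally mixed single-qubit state $\mathbb{I}_2/2$ — equivalently, the classical random bit $\sigma_1$ from the preliminaries restricted to one qubit — which has von Neumann entropy $\log 2 = 1$ (with $\log$ base $2$, consistent with the paper's normalization where a single shared classical bit has entropy $1$). This is a one-line calculation from \Cref{eq:ghz}: the off-diagonal $\op{0\cdots0}{1\cdots1}$ terms vanish under the partial trace as soon as at least one qubit is traced out, leaving $(\op{0}{0} + \op{1}{1})/2$ on the surviving qubit. Summing the contribution $1$ over the $N_s^{A_i}$ factors gives $S(A_i) = N_s^{A_i}$.

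There is one subtlety worth flagging rather than a genuine obstacle: one must handle the degenerate case where source $k$ sends \emph{all} of its qubits to $A_i$ — i.e., the source is a ``private'' GHZ state entirely inside $A_i$ — in which case the corresponding factor is the pure state $\ket{\mathrm{GHZ}_{n_k}}$ with entropy $0$, not $1$. Strictly, \subassref{all}{A2} only forbids a source sending \emph{two} qubits to $A_i$, so a single-qubit source ($n_k = 1$) would be such a private pure state. I would resolve this either by implicitly assuming every source has $n_k \ge 2$ (so that at least one qubit is always traced out) or by noting that a $1$-qubit ``source'' is not meaningfully a source of entanglement and can be excluded; the intended reading is that $N_s^{A_i}$ counts sources that genuinely share entanglement with the rest of the network, each contributing one maximally mixed qubit. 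The remainder of the argument — factorization, additivity of entropy, the single-qubit partial-trace computation — is routine. The main thing to get right is the normalization convention so that "one qubit from a GHZ source" contributes exactly $1$ to match the statement $S(A_i) = N_s^{A_i}$.
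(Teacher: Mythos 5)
Your proof is correct and follows essentially the same route as the paper's: both rest on the observation that each source contributes exactly one maximally mixed qubit to $A_i$, so the reduced state is $(\mathbb{I}_2/2)^{\otimes N_s^{A_i}}$ with entropy $N_s^{A_i}$; you simply make the tensor-product factorization and the single-qubit partial-trace computation explicit. The degenerate case you flag (a source lying entirely inside $A_i$) is likewise implicitly excluded by the paper's one-line argument, which assumes every qubit arriving at $A_i$ is maximally entangled with a qubit outside $A_i$.
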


\begin{proof}
Since only GHZ states are prepared and each source can send at most one qubit, the qubits received at node $A_i$ are all maximally entangled with another qubit that is not present in $A_i$. Thus, the state at node $A_i$ is maximally mixed and has a von Neumann entropy (which, in this case, is equivalent to the Shannon entropy) of $A_i$ is the number of qubits received, $N_s^{A_i}$. 
\end{proof}

Thus, in light of the graph-theoretic interpretation, \Cref{thm:Yang-thm} states that knowing the number of sources connected to each node is sufficient for knowing the topology of a quantum network. However, the assumption that nodes share no more than one entanglement is crucial and restrictive, as emphasized in the following example.

\paragraph{Example: triangle networks}

\begin{figure}
  \centering
    \begin{subfigure}{0.35\linewidth}
    \includegraphics[width=\linewidth]{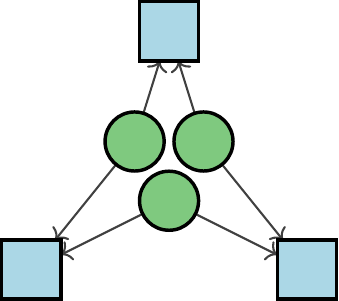}
    \caption{Network 1}
    \label{fig:tri-net-1}
    \end{subfigure}\hspace{30pt}
    \begin{subfigure}{0.35\linewidth}
    \includegraphics[width=\linewidth]{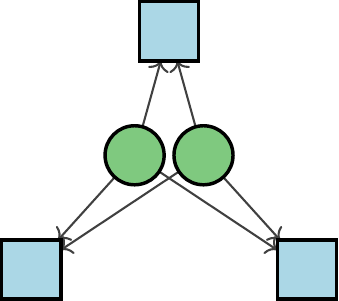}
    \caption{Network 2}
    \label{fig:tri-net-2}
    \end{subfigure}
    \caption{Example triangle networks that are indistinguishable solely from von Neumann entropy.}
    \label{fig:network-example}
\end{figure}

Consider the two networks as shown in \Cref{fig:network-example}. The first network satisfies the assumption Yang et al.~\cite{yang2022strong} made. Each node receives two qubits, each from two different sources. Since the subsystem of any maximally entangled state is a maximally mixed one, the von Neumann entropy at each node is $2$. On the other hand, the second network consists of only two preparation nodes, each preparing a $3$-qubit GHZ state. Each node in network 2 also receives two qubits, one from each source. Again by property of maximally entangled states, the von Neumann entropy at each node is $2$. One could take a step further and study the von Neumann entropy of the joint state of two measurement nodes only to find out that the two networks yield the same statistics. Thus, observing the von Neumann entropy alone cannot distinguish networks. 

As a solution, Yang et al.~\cite{yang2022strong} show that the Shannon mutual information can distinguish between the two networks in \Cref{fig:network-example}. Although the Shannon mutual information is evaluated from classical data, this entropic quantity must be evaluated for all groupings of parties where the number of groupings scales exponentially with the number of parties. Thus, this approach is not practical for large networks.

We propose the addition of the \textit{measured mutual information}. We claim that the basis that maximizes the Shannon entropy is the computational basis, which will be formally proven later. Take any two nodes in network 1. The Shannon entropy at each node will be $2$ since a maximally mixed state is information-theoretically equivalent to a fair classical coin flip. The joint state of the nodes can be written as 
\begin{align}
    \frac{1}{4} \left( \mathbb I_2 \tensor \ket\Phi \bra\Phi \tensor \mathbb I_2 \right),
\end{align} 
which acts equivalently as three independent fair coin flips. This yields a joint Shannon entropy of $3$ with measurements local to each node. Thus, the measured mutual information will be $1$ for all pairs of nodes in network $1$. 

The same does not hold for network 2! The joint state of any pair of nodes in network 2 
\begin{align}
    \frac{1}{4} \left( \ket{00}\bra{00} + \ket{11}\bra{11} \right)^{\tensor 2} 
\end{align}
has a Shannon entropy of $2$ upon measuring separately in the respective nodes. Thus, the measured mutual information in network 2 is $2$.

\subsection{Protocol for distinguishing network topology}

The example above gave evidence for a graph-theoretic interpretation of both entropic quantities---the von Neumann entropy of a node gives the number of sources the node is connected to, whereas the measured mutual information gives the number of sources the two nodes share. We formally present this in the lemma below, whose proof is deferred to \Cref{sec:pf-mmi-interpretation}.

\begin{lemma} \label{thm:MMI-interpretation}
Let a quantum network satisfy \Cref{ass:all}. Then, for any two measurement nodes $A_i$ and $A_j$, $\mathcal I_m (A_i;A_j) = N_s^{A_i,A_j}$, where $N_s^{A_i,A_j}$ denote the number of sources they share.
\end{lemma}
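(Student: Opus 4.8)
The plan is to prove the two inequalities $\mathcal I_m(A_i;A_j)\ge N_s^{A_i,A_j}$ and $\mathcal I_m(A_i;A_j)\le N_s^{A_i,A_j}$ separately, after first pinning down the structure of the joint reduced state $\rho^{A_iA_j}$. Since conjugating a state by a product unitary $U_i\otimes U_j$ is absorbed into the optimization over local PVMs in \Cref{eq:meas-mut-info}, the measured mutual information is invariant under local unitaries at $A_i$ and $A_j$; so by \subassref{all}{A1} I may assume every source prepares $\ket{\text{GHZ}_n}$ in the computational basis. By \subassref{all}{A2} each source feeds at most one qubit to each node, so the qubits of $A_iA_j$ split into one pair $(B_i^{(k)},B_j^{(k)})$ with $B_i^{(k)}\subseteq A_i$, $B_j^{(k)}\subseteq A_j$, for each of the $N_s^{A_i,A_j}$ shared sources, plus one qubit of $A_i$ (resp.\ $A_j$) for each source touching only $A_i$ (resp.\ only $A_j$). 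Distinct sources factorize; reducing $\ket{\text{GHZ}_n}$ to two of its qubits gives $\op{\Phi}{\Phi}$ (if $n=2$) or $\sigma_2$ (if $n\ge3$), and to a single qubit gives $\mathbb{I}_2/2$. Writing $B_i,B_j$ for the shared qubits and $C_i,C_j$ for the rest, this gives $\rho^{A_iA_j}=\rho^{B_iB_j}\otimes(\mathbb{I}_2/2)^{\otimes|C_i|}\otimes(\mathbb{I}_2/2)^{\otimes|C_j|}$ with $\rho^{B_iB_j}=\bigotimes_k\tau_k$ and $\tau_k\in\{\op{\Phi}{\Phi},\sigma_2\}$.

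For the lower bound I measure every qubit at $A_i$ and $A_j$ in the computational basis: each $\tau_k$ is perfectly classically correlated there, so the outcomes on $B_i^{(k)}$ and $B_j^{(k)}$ coincide and are uniform, these $N_s^{A_i,A_j}$ shared bits are mutually independent, and the $C_i,C_j$ qubits contribute fair bits independent of everything. A direct Shannon computation then yields $H(\mathbb P(\vec a_i))+H(\mathbb P(\vec a_j))-H(\mathbb P(\vec a_i,\vec a_j))=N_s^{A_i,A_j}$, so $\mathcal I_m(A_i;A_j)\ge N_s^{A_i,A_j}$.

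For the upper bound — the substantive step — fix arbitrary local PVMs and, since the two measurements act on disjoint subsystems, imagine the measurement at $A_i$ performed first, giving outcome $a_i$ with probability $p(a_i)$ and conditional state $\rho_{A_j}^{a_i}$ at $A_j$. As $\vec a_j$ is obtained by a fixed measurement on the ensemble $\{p(a_i),\rho_{A_j}^{a_i}\}$, Holevo's bound gives $H(\mathbb P(\vec a_i))+H(\mathbb P(\vec a_j))-H(\mathbb P(\vec a_i,\vec a_j))\le S(\rho_{A_j})-\sum_{a_i}p(a_i)S(\rho_{A_j}^{a_i})$, using $\sum_{a_i}p(a_i)\rho_{A_j}^{a_i}=\rho_{A_j}$. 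The key point is that the non-shared qubits $C_j$ of $A_j$ form a fixed factor $(\mathbb{I}_2/2)^{\otimes|C_j|}$ that is disjoint from $A_i$ and thus unaffected by the $A_i$-measurement, so $\rho_{A_j}^{a_i}=\rho_{B_j}^{a_i}\otimes(\mathbb{I}_2/2)^{\otimes|C_j|}$ for every $a_i$; the additive $|C_j|$ term then cancels between $S(\rho_{A_j})$ and the average of $S(\rho_{A_j}^{a_i})$, leaving the bound $\le S(\rho_{B_j})-\sum_{a_i}p(a_i)S(\rho_{B_j}^{a_i})\le S(\rho_{B_j})$. Finally $\rho_{B_j}=\bigotimes_k\tr_{B_i}\tau_k=(\mathbb{I}_2/2)^{\otimes N_s^{A_i,A_j}}$, so $S(\rho_{B_j})=N_s^{A_i,A_j}$, which closes the argument and the lemma.

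The main obstacle is precisely this upper bound. Bounding instead by the quantum mutual information $S(A_i)+S(A_j)-S(A_iA_j)$ is too lossy, since a source shared by \emph{only} $A_i$ and $A_j$ supplies a pure Bell pair — worth $2$ to the quantum mutual information but only $1$ to any correlation extractable by local measurements. Passing through the Holevo quantity of the ensemble that the first measurement prepares at the other node is what correctly charges each shared source a single bit, no matter how many further nodes that source also feeds, and isolating the non-shared qubits as an inert tensor factor is what makes that quantity collapse to $S(\rho_{B_j})=N_s^{A_i,A_j}$. (One bookkeeping caveat, already implicit in \Cref{thm:VNE-interpretation}: the statement tacitly assumes each source is a genuine GHZ state on at least two qubits, so that every delivered qubit is maximally mixed; a one-qubit source would have to be excluded.)
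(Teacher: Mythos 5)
Your argument is correct, and the two directions split cleanly against the paper's proof. Your lower bound (measure everything in the computational basis, note that each shared source yields one perfectly correlated fair bit and everything else is an independent fair bit, and compute $H(\mathbb P(\vec a_i))+H(\mathbb P(\vec a_j))-H(\mathbb P(\vec a_i,\vec a_j))=N_s^{A_i,A_j}$) is exactly the computation in \Cref{sec:pf-mmi-interpretation}. Where you genuinely diverge is the optimality direction. The paper handles it by two auxiliary lemmas (\Cref{thm:ent-lb,thm:ent-lb-2}) asserting a per-factor lower bound of $1$ on the Shannon entropy of measuring $\sigma_2$ or a Bell pair, and then asserts that the computational basis attains these bounds simultaneously; it does not explicitly address why those per-factor bounds add up when the PVM at a node may entangle qubits coming from different sources. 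Your route instead fixes arbitrary node-local PVMs, views the $A_i$-measurement as preparing the ensemble $\{p(a_i),\rho_{A_j}^{a_i}\}$ at $A_j$, and applies the Holevo bound; the observation that the non-shared qubits of $A_j$ sit in an inert maximally mixed tensor factor makes the Holevo quantity collapse to $S(\rho_{B_j})=N_s^{A_i,A_j}$. This is a genuinely different key lemma, it automatically accommodates entangled measurements within a node, and it correctly charges each shared source one bit rather than the two bits a naive quantum-mutual-information bound would give for a pure Bell pair --- so it arguably closes a step the paper leaves informal, at the cost of invoking heavier machinery than the paper's elementary entropy inequalities. Your parenthetical caveat about single-qubit sources is harmless here, since such a source cannot be shared between two nodes and a pure unshared qubit cancels out of the mutual information anyway.
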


The interpretation presented in \Cref{thm:VNE-interpretation} and \Cref{thm:MMI-interpretation} will be useful for proving the correctness of our protocol. Furthermore, in spirit the characteristic vector in~\cite{yang2022strong}, we define the \textit{characteristic matrix} of a quantum network to be
    \begin{equation}
        M_{\mathcal N} = \begin{pmatrix}
            S(A_1) & \mathcal I_m (A_1; A_2) & & \dots & & \mathcal I_m (A_1; A_{N_m}) \\
            \mathcal I_m (A_2; A_1) & S(A_2) & \mathcal I_m (A_2; A_3) &\dots& & \vdots \\
            \vdots & & &\dots& & \mathcal I_m (A_{N_m-1}; A_{N_m}) \\
            \mathcal I_m (A_{N_m}; A_1) & &  & \dots& \mathcal I_m (A_{N_m}; A_{N_m -1}) & S(A_{N_m})
        \end{pmatrix} \label{eq:char-mat}
    \end{equation}
where the diagonal is the characteristic vector $V_{\mathcal N}$ containing the von Neumann entropy and the off-diagonals are the measured mutual information.
Note that the matrix is symmetric, $M_{\mathcal N} = M_{\mathcal N}^\intercal$. By introducing the off-diagonal terms, we can quantify the number of sources two nodes share. This addition allows us to extend the previous network classification protocol to include cases where more than one entanglement is shared between nodes.

{We note that the characteristic matrix bears resemblance to the covariance matrix used in the semidefinite tests for network compatibility~\cite{Kela2020_semidefinite_tests,Aberg2020_semidefinite_tests, Kraft2021_characterizing_quantum_networks} where the off-diagonals of the covariance matrix are nonzero if and only if a source correlates two measurements.
A key distinction is that the covariance matrix is evaluated from the classical data sampled from the network, whereas the characteristic matrix is evaluated by optimizing the measurements with respect to von Neumann entropy and measured mutual information.
Indeed, it is significantly more efficient to evaluate the covariance matrix, however, having control over the measurement apparatus improves our ability to probe the strength of the correlation between measurement devices. Thus, the characteristic matrix might give a more detailed view of the network's topology.
We study the differences between entropic and covariance methods later in the manuscript (cf.~\Cref{sec:infer-network}) for inferring the topology of quantum networks using qubit measurements.}

We now show that the topology of an $n$-local quantum network can be fully characterized by the characteristic matrix $M_{\mathcal N}$, which specifies the von Neumann entropy at each node and the measured mutual information of each pair of nodes. 

\begin{theorem} \label{thm:noiseless-classifier}
Let two quantum networks, $\mathcal N \ind 1$ and $\mathcal N \ind 2$, satisfy \Cref{ass:all}. $\mathcal N \ind 1$ and $\mathcal N \ind 2$ have the same topology (cf. \Cref{def:isomorphism}) if and only if $S(A_i \ind 1) = S(A_i \ind 2)$ for all nodes $A_i$ and $\mathcal I_m(A_i \ind 1;A_j \ind 1) = \mathcal I_m(A_i \ind 2; A_j \ind 2)$ for all pairs of nodes $A_i, A_j$.
\end{theorem}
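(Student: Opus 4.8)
The plan is to prove both directions of the equivalence, using the graph-theoretic interpretations established in \Cref{thm:VNE-interpretation} and \Cref{thm:MMI-interpretation}. The forward direction ($\Rightarrow$) is the easy one: if $\mathcal N \ind 1$ and $\mathcal N \ind 2$ have the same topology via bijections $\phi, \varphi$ as in \Cref{def:isomorphism}, then the number of sources attached to $A_i \ind 1$ equals the number attached to $A_{\varphi(i)}\ind 2$, and likewise the number of sources shared by a pair $A_i\ind 1, A_j\ind 1$ equals that shared by $A_{\varphi(i)}\ind 2, A_{\varphi(j)}\ind 2$ — this is immediate from the fact that $\varphi$ together with $\phi$ is an edge-preserving bijection of the bipartite graph. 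Combined with the two interpretation lemmas, $S(A_i\ind 1) = N_s^{A_i\ind 1} = N_s^{A_{\varphi(i)}\ind 2} = S(A_{\varphi(i)}\ind 2)$ and similarly for $\mathcal I_m$. (One subtlety: the theorem statement is written with the identity permutation on node indices, so implicitly we assume the nodes have been relabeled so that $\varphi = \mathrm{id}$; I would state this matching explicitly, as \Cref{thm:Yang-thm} does with its ``up to permutation'' phrasing.)

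The reverse direction ($\Leftarrow$) is the substantive part. Given that $S(A_i\ind 1) = S(A_i\ind 2)$ and $\mathcal I_m(A_i\ind 1; A_j\ind 1) = \mathcal I_m(A_i\ind 2; A_j\ind 2)$ for all $i, j$, the interpretation lemmas tell us that the two networks have identical ``node degree'' data $N_s^{A_i}$ and identical ``pairwise shared-source'' data $N_s^{A_i, A_j}$. The goal is to reconstruct the bipartite graph (up to source relabeling $\phi$) from this data. The key structural observation is that under \Cref{ass:all}, a source $\Lambda_k$ is completely determined by the subset of measurement nodes it connects to — call this its \emph{neighborhood pattern} $\mathrm{nbhd}(\Lambda_k) \subseteq \bm A$ — and the topology of the whole network is determined by the multiset $\{\mathrm{nbhd}(\Lambda_k)\}_{k=1}^{N_s}$ of these patterns. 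So the reduction is: show that the multiset of neighborhood patterns is recoverable from the numbers $\{N_s^{A_i}\}_i$ and $\{N_s^{A_i,A_j}\}_{i<j}$. I would argue this by an inclusion-exclusion / induction on the size of node-subsets: for a node-subset $T \subseteq \bm A$, let $c(T)$ be the number of sources whose neighborhood pattern is \emph{exactly} $T$, and let $d(T)$ be the number of sources connecting to \emph{at least} all nodes in $T$ (i.e.\ $T \subseteq \mathrm{nbhd}(\Lambda_k)$); then $d(\{A_i\}) = N_s^{A_i}$, $d(\{A_i, A_j\}) = N_s^{A_i, A_j}$, and $d(T) = \sum_{T' \supseteq T} c(T')$, so Möbius inversion recovers all $c(T)$ from the $d(T)$'s. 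The catch is that our data only gives $d(T)$ for $|T| \le 2$, so I need the additional fact — which is exactly the content of \Cref{subass:A2} combined with the GHZ structure — that a source attached to $m$ nodes contributes one qubit to each, and an $m$-qubit GHZ state distributed one-qubit-per-node with $m \ge 3$ would itself have to be accounted for; but in fact no constraint forces $|\mathrm{nbhd}(\Lambda_k)| \le 2$, so I should think harder here.

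The resolution, and the heart of the argument, is that sources attached to three or more nodes \emph{do} leave a detectable fingerprint in the pairwise data, and I would exploit a counting identity: $\sum_{i<j} N_s^{A_i,A_j} = \sum_{k} \binom{|\mathrm{nbhd}(\Lambda_k)|}{2}$ and $\sum_i N_s^{A_i} = \sum_k |\mathrm{nbhd}(\Lambda_k)|$, plus similar higher identities, which together with the per-node and per-pair values pin down the pattern multiset when the networks are on the same (known) node set — or, more cleanly, I would simply show that for the class of networks under consideration the map (topology) $\mapsto$ (characteristic matrix) is injective by arguing directly on the bipartite graph structure: two sources with the same neighborhood pattern are interchangeable, and the characteristic matrix records, for each node, its total source-count and, for each pair, their overlap, which for a bipartite graph with no parallel edges is enough to reconstruct the pattern multiset by the Möbius argument restricted to the patterns that actually occur. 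I expect the main obstacle to be precisely this: making rigorous that $\{N_s^{A_i}\}$ and $\{N_s^{A_i,A_j}\}$ determine the full neighborhood-pattern multiset, since a priori two networks could differ only in how high-degree sources ($\ge 3$ neighbors) are arranged while agreeing on all degree-$\le 2$ statistics — I would either need to rule this out with a dimension/counting argument or restrict attention (perhaps implicitly, as the GHZ-distributed-network literature often does) to the case where the relevant inversions are determined. If the latter, I would flag that the theorem as stated may need the neighborhood patterns to be of size at most two, matching the ``triangle network'' regime of the motivating example, and prove the clean statement there first.
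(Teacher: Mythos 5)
Your forward direction is exactly the paper's (it is immediate from \Cref{thm:VNE-interpretation,thm:MMI-interpretation}), and your diagnosis of the reverse direction is the right one: everything reduces to whether the multiset of source neighborhoods $\{\mathrm{nbhd}(\Lambda_k)\}_k$ is determined by the vertex counts $N_s^{A_i}$ and pairwise counts $N_s^{A_i,A_j}$. The obstacle you flag --- that this is a M\"obius inversion given only the $|T|\le 2$ data, which a priori cannot see how sources with three or more neighbors are arranged --- is not just genuine but fatal: the inversion has a nontrivial kernel already on four measurement nodes, so no counting identity or dimension argument will rescue it. Concretely, on nodes $\{A,B,C,D\}$ let $\mathcal N^{(1)}$ consist of one Bell source for each of the six pairs together with three $4$-qubit GHZ sources each linked to all four nodes ($9$ sources, $24$ links), and let $\mathcal N^{(2)}$ consist of two copies of each of the four $3$-qubit GHZ sources on the triples of nodes ($8$ sources, $24$ links). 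Both satisfy \Cref{ass:all}; every node receives $6$ qubits, so $S(A_i)=6$ throughout, and every pair of nodes shares $1+3=4$ sources in $\mathcal N^{(1)}$ and $2+2=4$ sources in $\mathcal N^{(2)}$, so the two characteristic matrices coincide; yet the networks have different numbers of sources and therefore cannot have the same topology under \Cref{def:isomorphism}. So the step you could not complete genuinely cannot be completed in the stated generality.

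For comparison, the paper's own necessity proof takes a different route: it transports ``triplets'' $(A_i,\Lambda_k,A_j)$ of $\mathcal N^{(1)}$ onto those of $\mathcal N^{(2)}$ by a map $\xi$ and argues that if the induced source relabeling were ill-defined, some $\mathcal I_m$ or some $S(A_i)$ would strictly decrease. That step does not hold in general --- reassigning a single triplet to a different source need not change any pairwise count, as the example above shows --- so the paper does not resolve the obstacle you identified either. Your proposed repair is the correct one: prove the clean statement in the regime where every source neighborhood has size at most two (so that the $|T|\le 2$ data is complete data and your M\"obius argument closes), or on at most three measurement nodes, where one checks directly that the inversion kernel is trivial; the latter already covers the triangle networks that motivate the theorem. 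State that restriction explicitly rather than trying to force the general case.
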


We defer the proof of the theorem to \Cref{sec:pf-noiseless-classifier}.

\begin{figure}
    \centering
    \includegraphics[width=.8\linewidth]{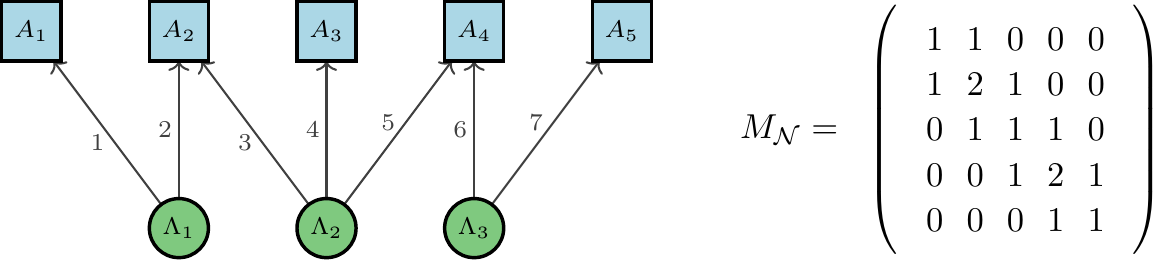}
    \caption{Example quantum network and its respective characteristic matrix $M_{\mathcal N}$. \Cref{thm:noiseless-classifier} shows that $M_{\mathcal N}$ uniquely characterizes a quantum network. However, inferring the network from $M_{\mathcal N}$ is nontrivial.}
    \label{fig:meas-infer-ex}
\end{figure}

The theorem states that one can uniquely find the topology of a network from its characteristic matrix. Suppose entropic quantities can be reliably calculated, verifying if two networks have the same topology requires only that the number of queries grows polynomially with respect to the number of nodes.

Inferring the topology from the characteristic matrix remains a difficult task. See \Cref{fig:meas-infer-ex} for an example network and its corresponding characteristic matrix $M_{\mathcal N}$. Knowing the topology, $M_{\mathcal N}$ can be straightforwardly obtained using \Cref{thm:VNE-interpretation} and \Cref{thm:MMI-interpretation}. However, we encourage the reader to try the other direction. Even though $M_{\mathcal N}$ indicates whether $A_i$ and $A_j$ share a source(s), finding the correct number of sources $N_s$ and assigning nodes to the respective sources appear to be highly nontrivial. Naively, one could search through all possible quantum networks with $N_s$ sources, for all possible $N_s$. The search space grows exponentially and would not be tractable for large networks. Thus, we defer the existence of a polynomial-time algorithm for decoding the characteristic matrix $M_{\mathcal N}$ as a future direction. 

\subsection{Distinguishing the topology of noisy networks}
Maximally entangled states such as GHZ states are fragile and easily corrupted by noise. In this section, we hope to establish robustness for the classification protocol introduced subjected to depolarizing noise, that is, for a quantum state $\rho$, a depolarizing channel $\mathcal E_\gamma$ performs the map
\begin{align}
    \mathcal E_\gamma (\rho) = (1-\gamma)\rho + \frac{\gamma}{2^n} \mathbb I_{2^n}
\end{align}
where $\gamma \in [0,1]$ is the parameter of the channel, and $n$ is the number of qubits involved in state $\rho$. In the quantum network setting, depolarizing noise acts jointly on qubits prepared by a source, and sources are affected independently.

\paragraph{Example: triangle network revisited}
Consider the triangle network 1 shown in \Cref{fig:tri-net-1}. Like in the noiseless case, the Shannon entropy at each measurement device is independent of the choice of measurement basis, and $H(A_i) = S(A_i) = 1$ for all measurement nodes in the network. In an attempt to characterize the topology, we look at the measured mutual information between devices $A_i$ and $A_j$, where the joint state is
\begin{align}
    \rho_{A_i \cup A_j} = \frac{\mathbb I_2}{2} \tensor \ket{\Phi} \bra \Phi \tensor \frac{\mathbb I_2}{2}.
\end{align}
If each qubit is sent through a depolarizing channel of the same noise parameter, then the joint state received $\mathcal E_\gamma (\rho_{A_i \cup A_j})$ becomes 
\begin{align}
    \mathcal E_\gamma (\rho_{A_i \cup A_j}) = \frac{\mathbb I_2}{2} \tensor \left( (1-\gamma) \ket \Phi \bra \Phi + \frac{\gamma}{4} \mathbb I_4 \right) \tensor \frac{\mathbb I_2}{2},
\end{align}
which yields the Shannon entropy of 
\begin{align}\label{eq:noisy-type-one}
    -H(\mathcal E_\gamma (\rho_{A_i \cup A_j})) = \frac{2-\gamma}{2} \log \left(2-\gamma\right) + \frac{\gamma}{2} \log \gamma - 4
\end{align}
when measured using the computational basis for both devices. When the channel is noiseless, i.e.~$\gamma = 0$, then $\mathcal I_m(A_i;A_j) = 1$ and we recover the results shown in \Cref{sec:topo-classification}. However, if the channel is completely noisy, meaning that $\gamma = 1$ , then $H(\mathcal E_\gamma(\rho_{A_i \cup A_j})) = 4$ and the measured mutual information is zero, which can lead one at the receiver end to think that qubits received at $A_i$ and $A_j$ are independent of one another. 

The same calculation can be applied to network 2 in \Cref{fig:network-example}. Let $\sigma_2 = (\ket{00}\bra{00} + \ket{11}\bra{11})/2$. Knowing that the joint system $\rho_{A_i \cup A_j} = \sigma_2 \tensor \sigma_2$, we focus on the behavior of one $\sigma_2$ knowing that the remaining system behaves identically and independently. We can find the noisy joint system of $\sigma_2$ to be
\begin{align}
    \mathcal E_\gamma (\sigma_2) = \frac{\gamma}{4} \mathbb I_4 + (1-\gamma)\sigma_2.
\end{align}
The above state yields the following Shannon entropy when measured in the computational basis.
\begin{align} 
    -H(\mathcal E_\gamma (\sigma_2)) = \frac{2-\gamma}{2} \log \left(2-\gamma\right) + \frac{\gamma}{2} \log \gamma - 2
\end{align}
Since the joint system is separable, the entropy of the joint system simply adds.
\begin{align}\label{eq:noisy-type-two}
    -H(\mathcal E_\gamma (\rho_{A_i \cup A_j})) = (2-\gamma) \log \left(2-\gamma\right) + \gamma \log \gamma - 4
\end{align}
Again, we can cover the noiseless and completely random behavior when $\gamma = 0$ or $\gamma = 1$, respectively. However, we are interested in determining if the two systems are ever indistinguishable due to noise. 

Claiming the computational basis is again the basis of choice that maximizes the mutual information and from \Cref{eq:noisy-type-one,eq:noisy-type-two}, the measured mutual information is, respectively,
\begin{align}
    \mathcal I_m  \ind 1 (A_i;A_j) &= \frac{2-\gamma}{2} \log \left(2-\gamma\right) + \frac{\gamma}{2} \log \gamma, \\
    \mathcal I_m \ind 2 (A_i;A_j) &= (2-\gamma) \log \left(2-\gamma\right) + \gamma \log \gamma.
\end{align}
We can see that the measured mutual information of the first network is half of the second one. Consequently, this means that for depolarizing of strength $\gamma \in [0, 1)$, the measured mutual information will always have a non-zero gap and we will be able to distinguish the two networks through local measurements. 

We extend \Cref{thm:noiseless-classifier} to the case of applying uniform global noise on sources. To do so, we first want to show that the same strategy of measuring in the computational basis in the noiseless case is still valid in the presence of noise (which is an extension of \Cref{thm:ent-lb,thm:ent-lb-2} in the appendix to the noisy case).

\begin{lemma}
Consider a quantum network satisfying \Cref{ass:all}, measurement nodes $A_i$ and $A_j$. Moreover, the network is made up of depolarizing channels that act on each qubit with strength $\gamma$. Then, the local measurement basis that maximizes the Shannon mutual information, that is
\begin{align}
    \underset{\{\Pi^{A_i}_{\vec a_i}\}, \{\Pi^{A_j}_{\vec a_j}\}}{\textnormal{argmax}} H(\mathbb P (\vec a_i)) + H(\mathbb P (\vec a_j)) - H(\mathbb P (\vec a_i, \vec a_j))
\end{align}
is the computational basis.
\end{lemma}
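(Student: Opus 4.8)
The plan is to reduce the basis optimization to a statement about the Shannon entropy of a single, explicitly known outcome distribution, and then verify that statement using the tensor-product structure of the noisy state. First I would observe that, exactly as in \Cref{thm:VNE-interpretation}, under \Cref{ass:all} every qubit arriving at $A_i$ is one half of a GHZ source whose remaining qubits lie at other nodes (by \subassref{all}{A2}); hence the reduced state $\rho^{A_i}$ is maximally mixed, and since depolarizing channels are unital this is unchanged by the noise. Consequently, for any rank-one PVM at $A_i$ (and the optimal PVM may be taken rank-one, since merging outcomes only decreases mutual information) the marginal $\mathbb{P}(\vec a_i)$ is uniform over $2^{n_i}$ values with $n_i := N_s^{A_i}$, so $H(\mathbb{P}(\vec a_i)) = n_i$ regardless of the basis, and likewise $H(\mathbb{P}(\vec a_j)) = n_j$. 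Therefore maximizing the measured Shannon mutual information is the same as \emph{minimizing} the joint outcome entropy $H(\mathbb{P}(\vec a_i, \vec a_j))$; equivalently, writing rank-one PVMs via local unitaries, the goal becomes to show that $\min_{U_i, U_j} H(\mathrm{diag}( (U_i \otimes U_j)\, \rho_{A_i \cup A_j}\, (U_i \otimes U_j)^{\dagger} ))$ is attained at $U_i = U_j = \mathbb{I}$.

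Next I would decompose the noisy joint state. Using \Cref{ass:all} together with the fact that the noise acts independently on each source, $\rho_{A_i \cup A_j}$ is a tensor product of (i) single-qubit factors $\mathbb{I}_2/2$, one per source linked to exactly one of $A_i, A_j$, and (ii) two-qubit factors, one per shared source, each equal to $(1-\gamma')\, \tau + \tfrac{\gamma'}{4}\mathbb{I}_4$ for an effective strength $\gamma' \in [0,1]$ determined by $\gamma$, where $\tau = \op{\Phi}{\Phi}$ if the shared source is a two-qubit GHZ state sent only to $A_i$ and $A_j$, and $\tau = \sigma_2$ if that source also feeds a third node (tracing out that node's qubit from a GHZ state leaves the classically correlated $\sigma_2$). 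On this decomposition the product-measurement case is a short computation: a factor $\mathbb{I}_2/2$ contributes $1$ to the outcome entropy independently of basis; and a two-qubit factor has maximally mixed one-qubit marginals for both choices of $\tau$, so after any local rotation its diagonal has the form $(t, \tfrac12 - t, \tfrac12 - t, t)$ with $t = (1-\gamma')\langle\psi|\tau|\psi\rangle + \tfrac{\gamma'}{4}$ for the product state $|\psi\rangle = (U_i^\dagger \otimes U_j^\dagger)\ket{00}$. Since a product state overlaps either $\op{\Phi}{\Phi}$ or $\sigma_2$ with probability at most $\tfrac12$, we get $t \in [\tfrac{\gamma'}{4}, \tfrac{2-\gamma'}{4}]$, and the entropy $1 + h(2t)$ ($h$ the binary entropy) is minimized precisely at the two endpoints of this interval, both of which the computational basis realizes. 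Summing over blocks shows the computational basis is optimal among product measurements and reproduces the per-shared-source value $f(\gamma') = \tfrac{2-\gamma'}{2}\log(2-\gamma') + \tfrac{\gamma'}{2}\log\gamma'$ seen in the triangle-network example.

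The step I expect to be the real obstacle is ruling out PVMs at $A_i$ (and at $A_j$) that entangle qubits received from different sources: neither the crude bound $H(\mathbb{P}(\vec a_i,\vec a_j)) \ge S(\rho_{A_i\cup A_j})$ nor the quantum-mutual-information bound $\mathcal{I}_m(A_i;A_j) \le I(A_i;A_j)_{\rho}$ is tight, the slack coming entirely from the $\op{\Phi}{\Phi}$-type blocks, whose coherences a local measurement cannot fully exploit. This is the same phenomenon already handled, in the noiseless case, inside \Cref{thm:ent-lb,thm:ent-lb-2}, so I would finish by checking that their argument transfers unchanged: each two-qubit block above is invariant under $P \otimes P$ for every Pauli $P$ (the Werner-type symmetry shared by $\op{\Phi}{\Phi}$, $\sigma_2$, and $\mathbb{I}_4$) and has real, diagonally dominant entries with maximally mixed marginals --- exactly the structure those lemmas use to show the measurement-entropy minimum over local unitaries is attained at a block-product configuration --- whereupon the previous paragraph concludes.

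Alternatively, one can argue by hand: use $\mathcal{I}_m(A_i;A_j) \le I(A_i;A_j)_\rho = \sum_{\text{blocks}} I_k$, which is tight (and equals $\sum f(\gamma')$) for the $\mathbb{I}_2/2$ and $\sigma_2$ blocks, and for each $\op{\Phi}{\Phi}$-type block replace $I_k$ by the sharp bound that any local rank-one PVM on a noisy Bell pair has two outcomes per side with correlation at most $1-\gamma'$ and hence extracts at most $f(\gamma')$ bits of mutual information, then patch the two estimates together. Either way, the $\op{\Phi}{\Phi}$-type block is the one place where the argument must be done carefully rather than quoted from a generic entropy inequality, and that is where I would concentrate the effort.
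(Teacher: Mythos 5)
Your proposal is correct in substance but takes a genuinely different, and considerably more explicit, route than the paper. The paper's entire proof is a two-line reduction: writing the noisy block as $\mathcal E_\gamma(\rho) = (1-\gamma)\rho + \frac{\gamma}{4}\mathbb I_4$, it observes that conjugation by any local unitary leaves the identity part invariant, so the measured distribution is always the noiseless distribution mixed with the uniform one, and it concludes that the optimal basis is therefore unchanged from the noiseless case (implicitly leaning on \Cref{thm:ent-lb,thm:ent-lb-2}). That last step is actually the one that needs justification --- mixing with the uniform distribution preserves the optimizer only because the computational-basis outcome distribution $(\tfrac12,0,0,\tfrac12)$ majorizes every achievable product-measurement distribution $(t,\tfrac12-t,\tfrac12-t,t)$ and entropy is Schur-concave --- and your explicit computation that the joint entropy equals $1+h(2t)$ with $t$ ranging over $[\tfrac{\gamma'}{4},\tfrac{2-\gamma'}{4}]$ and minimized at both endpoints supplies exactly this missing verification (your identification of the effective strength $\gamma'=1-(1-\gamma)^2$ for qubit-wise noise, which the paper elides by treating the noise as source-global, is also correct). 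What the paper's approach buys is brevity and the ability to quote the noiseless lemmas wholesale; what yours buys is a self-contained and airtight argument for product measurements, plus an honest flag of the one genuine gap shared by both proofs: neither \Cref{thm:ent-lb} nor \Cref{thm:ent-lb-2} actually rules out PVMs that entangle the several qubits arriving at a single node (which \subassref{all}{A3} permits), and the paper simply does not address this case in the noisy or the noiseless setting. Your proposed repair via the $P\otimes P$ symmetry of the blocks, or via a per-block bound on extractable correlation for the $\op{\Phi}{\Phi}$-type factors, is a reasonable way to close it, though as written it remains a sketch rather than a completed step.
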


\begin{proof}
Let $\rho$ be either a Bell pair $\ket \Phi \bra \Phi$ or shared random bits $\sigma_2$. Under depolarizing noise on the source, the state becomes 
\begin{align}
    \mathcal E_\gamma(\rho) = (1-\gamma) \rho + \frac{1}{4} \mathbb I_4.
\end{align}
Note that for any unitary $U$ applied onto the noisy state, the effect of the noise stays unchanged, that is 
\begin{align}
    U \mathcal E_\gamma(\rho) U^\dagger = (1-\gamma) U \rho U^\dagger + \frac{1}{4} \mathbb I_4.
\end{align}
Thus, the measurement basis maximizing the Shannon entropy of $\rho$ remains the same for all $\gamma > 0$.
\end{proof}

We are interested in whether there exists a $\gamma \in [0,1)$ such that the measured mutual information of two networks is the same. If such $\gamma$ exists, denote $\gamma^*$, then the two networks are identical according to the protocol within some small neighborhood of $\gamma^*$. On the other hand, if the measured mutual information remains distinct for all $\gamma$ and pairs across two networks, then the protocol will be applicable for any noise level that is not completely depolarizing while assuming an infinite precision. Below, we provide one condition that guarantees such robustness to noise.

\begin{theorem} \label{thm:noisy-classifier}
Consider two quantum networks $\mathcal N \ind 1$ and $\mathcal N \ind 2$ satisfying \Cref{ass:all}. For depolarizing channels with known strength $\gamma \in [0,1)$, we can distinguish the topology of $\mathcal N \ind 1$ from $\mathcal N \ind 2$.
\end{theorem}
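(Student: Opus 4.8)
The plan is to reduce the noisy case to the noiseless classifier, \Cref{thm:noiseless-classifier}, by showing that for a \emph{fixed, known} $\gamma$ the characteristic matrix of the noisy network is an invertible image of its noiseless characteristic matrix. Concretely, I would establish $S(A_i)=N_s^{A_i}$ and $\mathcal I_m(A_i;A_j)=N_s^{A_i,A_j}\,f(\gamma)$ for an explicit factor $f(\gamma)$ that depends on $\gamma$ but not on the source counts and is nonzero on $[0,1)$; dividing the off-diagonal entries of the observed matrix by $f(\gamma)$ then reconstructs the noiseless characteristic matrix, to which \Cref{thm:noiseless-classifier} applies directly.

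First I would pin down the effect of the noise. Tracing out the complementary qubits of a depolarized GHZ source, $(1-\gamma)\op{\text{GHZ}_n}{\text{GHZ}_n}+\tfrac{\gamma}{2^n}\mathbb I_{2^n}$, leaves a single maximally mixed qubit for any $\gamma$; since distinct sources act on disjoint, independent registers, the state at a node is still $(\mathbb I_2/2)^{\otimes N_s^{A_i}}$, so $S(A_i)=N_s^{A_i}$ is untouched. For a pair $A_i,A_j$, the joint state $\rho_{A_i \cup A_j}$ again factorizes across sources: a source feeding exactly one of the two nodes contributes a maximally mixed qubit there, while each shared source contributes, after the source-level depolarizing, the two-qubit state $(1-\gamma)\sigma_2+\tfrac{\gamma}{4}\mathbb I_4$, which has the same computational-basis statistics as the $(1-\gamma)\op{\Phi}{\Phi}+\tfrac{\gamma}{4}\mathbb I_4$ appearing in the triangle example. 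By the preceding lemma the computational basis remains optimal for $\mathcal I_m(A_i;A_j)$ (it is optimal on each tensor factor and arbitrary on a maximally mixed one, hence optimal for the product); measuring every node in that basis, the Shannon entropies add over the tensor factors, and the computation behind \Cref{eq:noisy-type-one,eq:noisy-type-two} shows that each shared source contributes $f(\gamma):=\tfrac{2-\gamma}{2}\log(2-\gamma)+\tfrac{\gamma}{2}\log\gamma$ to $\mathcal I_m(A_i;A_j)$, so by independence $\mathcal I_m(A_i;A_j)=N_s^{A_i,A_j}\,f(\gamma)$.

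It then remains to rule out degeneracies of $f$ on $[0,1)$. One checks that $f(\gamma)=1-H_b(\gamma/2)$, where $H_b$ denotes the binary entropy, so $f$ is strictly decreasing on $[0,1]$ with $f(0)=1$ and $f(1)=0$; in particular $f(\gamma)>0$ for every $\gamma\in[0,1)$. Hence, for known $\gamma\in[0,1)$, the noisy characteristic matrix determines the noiseless one---the diagonal is unchanged and $N_s^{A_i,A_j}=\mathcal I_m(A_i;A_j)/f(\gamma)$---so $\mathcal N \ind 1$ and $\mathcal N \ind 2$ yield the same noisy characteristic matrix if and only if they yield the same noiseless one, which by \Cref{thm:noiseless-classifier} holds if and only if they have the same topology.

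The main obstacle, and the reason the hypothesis ``known $\gamma$'' cannot simply be dropped, is precisely the inversion of the scaling by $f(\gamma)$: with $\gamma$ unknown, the single observed number $N_s^{A_i,A_j}f(\gamma)$ is consistent with many pairs (shared-source count, noise level), and at $\gamma=1$ all measured mutual informations vanish, erasing the off-diagonal information outright. The only other point needing care is the bookkeeping that the computational basis is \emph{simultaneously} optimal for all the $\mathcal I_m(A_i;A_j)$ and realizes the $S(A_i)$---this follows from the preceding lemma together with the product structure of optimal measurements on product states---after which everything reduces to \Cref{thm:noiseless-classifier}.
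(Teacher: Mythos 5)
Your proof is correct and follows essentially the same route as the paper's: both observe that the von Neumann entropies are noise-invariant and that $\mathcal I_m(A_i;A_j)=N_s^{A_i,A_j}\,I(\gamma)$ for a universal per-source factor $I(\gamma)$ that is strictly positive for $\gamma\in[0,1)$, and then reduce to \Cref{thm:noiseless-classifier}. Your explicit identification $f(\gamma)=1-H_b(\gamma/2)$ and the inversion step $N_s^{A_i,A_j}=\mathcal I_m(A_i;A_j)/f(\gamma)$ merely make precise the strict gap that the paper's proof asserts more briefly.
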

\begin{proof}
Under the assumption that each preparation node can send at most one qubit, for any two measurement nodes $A_i$ and $A_j$, the joint state $\rho_{A_iA_j}$ will be the tensor product of $\mathbb I_2$, $\sigma_2$, and Bell pairs. In particular, the von Neumann entropy of each device is invariant with respect to noise. Moreover, the measured mutual information of $\sigma_2$ and $\ket \Phi \bra \Phi$ is the same for all $\gamma \in [0,1]$. Let $I(\gamma) = \mathcal I_m (\mathcal E_\gamma (\ket \Phi \bra \Phi)) = \mathcal I (\mathcal E_\gamma (\sigma_2))$, the measured mutual information as a function of $\gamma$ will be $\mathcal I_m (A_i;A_j) = N_s^{A_i,A_j} I(\gamma)$. 

If $\mathcal N \ind 1$ and $\mathcal N \ind 2$ have different topologies, it means that there exists at least one pair of nodes $A_i$ and $A_j$ such that (without loss of generality) $N_s^{A_i \ind 1, A_j \ind 1} > N_s^{A_i \ind 2, A_j \ind 2}$. Thus, for any depolarizing channel of strength $\gamma$, $\mathcal I_m (A_i \ind 1; A_j \ind 1) > \mathcal I_m (A_i \ind 2; A_j \ind 2)$. Thus, by \Cref{thm:noiseless-classifier}, the two networks can be distinguished given sufficient shots taken to estimate each entropic quantity. 
\end{proof}

Note that \Cref{thm:noisy-classifier} actually provides a limited sense of noise robustness, that is, we need a priori knowledge of $\gamma$. The theorem can remain useful for comparing two quantum networks with unknown topologies but are under the influence of depolarizing noise of the same strength or verifying the topology of one noisy network. Recall that there are no straightforward algorithms for deriving the topology from the characteristic matrix even in the noiseless case. Moreover, inferring topology only becomes more difficult when the two networks are exposed to noises of different strengths. Furthermore, the assumption of depolarizing noise is largely due to simplicity, and we leave formal analysis of the protocol using other noise models as future work.

\section{Variational Quantum Network Topology Inference} \label{sec:variational-inference}

In this section we introduce our variational quantum optimization framework for inferring the topology of sources in noisy uncharacterized quantum networks. We implement our approach as Python software called \texttt{qNetTI}: The Quantum Network Topology Inferrer, which is publicly available on GitHub \cite{qnetti_2023}. Our software applies the \texttt{PennyLane} framework for cross-platform differential programming \cite{pennylane2018}, and builds upon \texttt{qNetVO}: the Quantum Network Variational Optimizer software \cite{qNetVO,Doolittle2023}. Our variational optimization techniques are hardware agnostic and can easily be extended to many general quantum network characterization, validation, and verification tasks. We apply our variational quantum optimization methods in Section~\ref{sec:infer-network}, in which we compare entropy-based and covariance-based topology inference when local qubit measurements are applied.

In general, when inferring the topology of sources in a quantum network, the respective covariances or entropic quantities must be sufficiently large such that correlations between measurements, or lack thereof, can be witnessed. However, since the measurement nodes do not have a standard reference frame for each of their measurement bases, a naive measurement choice will not necessarily reflect the correlations between measurement devices with accuracy. Thus, by optimizing the covariances or entropic quantities used to infer network topology, the network inference protocol is made more robust to the errors that result from nonoptimal measurement choices.

To further motivate our hardware-compatible optimization scheme, we note that the entropic quantities such as von Neumann entropy or measured mutual information contain an optimization implicitly in their operational definition (see \Cref{eq:vn_entropy_as_min_shannon_entropy,eq:meas-mut-info} respectively). Indeed, the von Neumann entropy at a given measurement device cannot be ascertained by performing a single measurement. That is, either the reduced density matrix measured at the device must be known, or the measurements must be optimized such that the Shannon entropy is minimized as in \Cref{eq:vn_entropy_as_min_shannon_entropy}. A similar remark can also be made for optimization used to obtain the measured mutual information in \Cref{eq:meas-mut-info}. To obtain these quantities in practice a quantum-hardware-compatible optimization technique is needed, hence we apply variational quantum optimization methods.

Variational quantum optimization is a type of hybrid quantum-classical algorithm in which a classical computer tunes a parameterized quantum circuit such that a cost function is minimized~\cite{Cerezo2021}. The quantum circuit is evaluated on quantum hardware leading to the hardware being optimized or trained for the task encoded by the cost function. Generally, variational quantum algorithms can be applied to a wide range of optimization and simulation problems and show promise of providing practical advantages~\cite{preskill2018quantum}. While their seminal applications were mainly in quantum computing, variational quantum algorithms have recently been proposed as a technique for optimizing noisy and uncharacterized quantum network hardware for various tasks \cite{Doolittle2023}. 

Within our hybrid optimization framework for quantum network topology inference, we construct a variational ansatz as follows. Let the collection of sources prepare the state $\ket{\psi} = \bigotimes_{i=1}^{N_s}\ket{\psi^{\Lambda_i}}$ while each measurement node performs the PVM measurement $\{\Pi^{A_j}_{a_j}(\thetav^{A_j})\}_{a_j\in\mc{A}_j}$ where
\begin{align}
    \Pi^{A_j}_{a_j}(\thetav^{A_j}) = U^{A_j}(\thetav^{A_j})^\dagger\op{a_j}{a_j}U^{A_j}(\thetav^{A_j})
\end{align}
and $U^{A_j}(\thetav^{A_j})$ is a unitary operator parameterized by $\thetav^{A_j} \in \mathbb{R}^m$. In total the measurement $\Pi_{\vec{a}}(\Theta) = \bigotimes_{j=1}^{N_m} \Pi^{A_j}_{a_j}(\vec{\theta}^{A_j})$ is applied while the probability of measuring outcome $\vec{a} = (a_j)_{j=1}^{N_m}$ is
\begin{align}
    \mathbb{P}(\vec{a}|\Theta) = |\ip{\psi|\Pi_{\vec{a}}(\Theta)|\psi}|^2.
\end{align}
Then, for measurement node $A_j$, we may rewrite the von Neumann Entropy in \Cref{eq:vn_entropy_as_min_shannon_entropy} as the cost function
\begin{equation}\label{eq:vn_entropy_cost}
    S(\rho_{A_j}) = \min_{\vec{\theta}^{A_j}\in \mathbb{R}^m} H(\mathbb{P}(a_j|\vec{\theta}^{A_j})).
\end{equation}
Likewise, measured mutual information  in \Cref{eq:meas-mut-info} becomes
    \begin{align}\label{eq:measured_mutual_info_cost}
        -\mathcal{I}_m(A_i ; A_j) = \min_{\vec{\theta}^{A_i} \in \mathbb{R}^{m_i}, \; \vec{\theta}^{A_j} \in \mathbb{R}^{m_j}} H(\mathbb{P}(a_i, a_j|\vec{\theta}^{A_i}, \vec{\theta}^{A_j})) - H(\mathbb{P}(a_i|\vec{\theta}^{A_i})) - H(\mathbb{P}(a_j|\vec{\theta}^{A_j}),
    \end{align}
in which the minus sign on the measured mutual information results from the convention of minimizing the cost function in variational optimization.

To optimize the quantities in \Cref{eq:vn_entropy_cost,eq:measured_mutual_info_cost}, a gradient descent algorithm is used. Consider a generic cost function $f(\Theta)$ that we aim to minimize. Then, we can incrementally approach the optimal settings $\Theta^\star = \arg \min_{\Theta} f(\Theta)$ by updating the settings as
\begin{equation}
    \Theta' = \Theta - \eta \nabla_{\Theta} f(\Theta),
\end{equation}
where $\eta > 0$ is a small stepsize and the gradient of $f(\Theta)$, $\nabla_{\Theta}f(\Theta)$, is a vector pointing in the direction of steepest ascent. Therefore in many small steps the algorithm navigates its way to a local minimum in the landscape of the cost function.

To apply these variational optimization methods to infer the topology of an unknown quantum state $\ket{\psi}$, it must be run on the quantum network's hardware. The reason is that $\ket{\psi}$ is not known and therefore cannot be reconstructed or characterized. To evaluate gradients on quantum hardware we make use of the parameter-shift rule \cite{Schuld2019_parameter_shift}, which evaluates gradients on quantum hardware with a computational complexity that is linear in the number of parameters. Hence, our variational methods for quantum network topology inference can conceivably be applied with efficiency in practical quantum networking systems.

As a final remark, we note that our variational quantum network topology inference scheme provides several improvements to existing techniques. First, previous methods do not typically consider the measurements to be tunable. By optimizing measurements, our variational methods could improve the effectiveness of existing topology inference approaches in which the violation of entropic bounds or Bell inequalities are used to test for network topology \cite{Henson2014_entropic_bounds,chaves2015_entroptic_bounds,Weilenmann2017_entropic_bounds,Chaves2016_polynomial,Tavakoli2016tree,Rosset2016_nonlinear_bell_inequalities,Tavakoli2022_network_nonlocality,Renou2019_correlation_limits,Luo2021network_configuration,Wolfe2019_inflation,Wolfe2021_inflation}. Second, quantities such as the von Neumann entropy or measured mutual information cannot be obtained for unknown quantum states. Typically, calculating these quantities would require estimating the entire quantum state via state tomography. Alternatively, our quantum-hardware-compatible variational methods can approximate these quantities through optimization. Thus, entropy-based inference methods can be applied in practice. Finally, our methods are hardware agnostic. The parameter-shift rule requires only a parameterized description of the unitary applied by the measurement device, how the unitary is physically implemented is not relevant. 

\section{Inferring network topology from qubit measurements} \label{sec:infer-network}

Previously, we considered the qubits in a measurement node to be measured jointly. While there exists a unique mapping from network topology to the characteristic matrix (cf.~\Cref{thm:noiseless-classifier}), constructing the map in a reasonable amount of time, particularly for large networks, is difficult. Moreover, qubits undergoing depolarizing channels with unknown noise parameters can render the protocol useless. Thus, the classification protocol introduced in \Cref{sec:distinguish-network} would be of practical use only in very limited situations.

These issues can be resolved if measurements are available on the qubit level, which is often the case for real-life situations. The increased granularity allows for a protocol that can infer network topology from measurements local to each node in polynomial time. Moreover, the procedure no longer requires GHZ state preparations, is entirely robust to noise induced by quantum channels, and can maintain performance using low numbers of circuit evaluations.

The protocol is also not restricted to measured mutual information to estimate classical correlation. In particular, we show that covariance and classical mutual information are both suitable choices for identifying the entanglement structure of quantum networks. While the covariance-based methods generally exhibit more stable convergence and are more robust to noise, entropic alternatives excel under the low shot regime.

\subsection{Entropy-based protocol for inferring network topology}

We can derive a protocol for inferring network topology by straightforwardly extending the characteristic matrix $M_{\mathcal N}$ (\Cref{eq:char-mat}) to qubits. Let $\{q_i$, $i \in [N_q]\}$, be the set of qubits in the network. Then, we can define the \textit{qubit characteristic matrix}:
    \begin{equation}
        Q_{\mathcal N} = \begin{pmatrix}
            S(q_1) & \mathcal I_m (q_1; q_2) & & \dots & & \mathcal I_m (q_1; q_{N_q}) \\
            \mathcal I_m (q_2; q_1) & S(q_2) & \mathcal I_m (q_2; q_3) &\dots& & \vdots \\
            \vdots & & &\dots& & \mathcal I_m (q_{N_q-1}; q_{N_q}) \\
            \mathcal I_m (q_{N_q}; q_1) & &  & \dots& \mathcal I_m (q_{N_q}; q_{N_q -1}) & S(q_{N_q})
        \end{pmatrix}. \label{eq:qubit-char-mat}
    \end{equation}
Similar to $M_{\mathcal N}$, the diagonal entries stores the von Neumann entropy of each single-qubit state and the off-diagonal entries stores the correlation quantified by the measured mutual information. By treating each qubit as its own measurement node, the qubit characteristic matrix inherits all the properties of $M_{\mathcal N}$, and gains additional structure that can help with decoding the entanglement structure.

\begin{theorem}\label{thm:qubit_network_characterization}
    Consider an $n$-local network $\mathcal N$ measured using local qubit projectors $\Pi^{\mathcal N}_{\vec{a}} = \bigotimes_{j=1}^m \Pi_{a_j}^{q_j}$ where $a_j\in\{0,1\}$ and $q_j\in[N_q]$ index the measured qubit. 
    Moreover, suppose \Cref{ass:all} holds. Then, the network's topology is completely characterized by the qubit characteristic matrix $Q_{\mathcal N}$ (\Cref{eq:qubit-char-mat}), where the $i^{th}$ row lists the qubits entangled with qubit $q_i$ and the number of sources $N_s$ is equivalent to the number of unique rows (or columns) of $Q_{\mathcal N}$.
\end{theorem}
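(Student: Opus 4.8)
The plan is to reduce the statement to the two graph-theoretic interpretations already established for the joint-node setting---\Cref{thm:VNE-interpretation} and \Cref{thm:MMI-interpretation}---applied with each qubit playing the role of its own one-qubit measurement node. First I would observe that treating each qubit $q_i$ as a singleton node $A_i = \{q_i\}$ is legitimate under \Cref{ass:all}: \subassref{all}{A2} guarantees that the single qubit in $q_i$ comes from exactly one source, so the ``number of sources connected to $q_i$'' is either $1$ (if $q_i$ is a live link) or, vacuously, $0$. Hence \Cref{thm:VNE-interpretation} gives $S(q_i) = N_s^{q_i} \in \{0,1\}$, and \Cref{thm:MMI-interpretation} gives $\mathcal I_m(q_i; q_j) = N_s^{q_i, q_j}$, the number of sources feeding \emph{both} $q_i$ and $q_j$. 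But since a single qubit is attached to at most one source, $N_s^{q_i,q_j} \in \{0,1\}$ as well, and it equals $1$ precisely when $q_i$ and $q_j$ are the two endpoints of links emanating from a common source---equivalently, when $q_i$ and $q_j$ are entangled (both belong to the same GHZ block). This already justifies the claim that row $i$ of $Q_{\mathcal N}$ lists exactly the qubits entangled with $q_i$: the off-diagonal entry in position $(i,j)$ is $1$ iff $q_i, q_j$ share a source, and $0$ otherwise, while the diagonal entry flags whether $q_i$ is connected to a source at all.

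Next I would argue the two structural consequences. For the source count: partition the qubits into equivalence classes under the relation ``$q_i \sim q_j$ iff they come from the same source'' (with unconnected qubits, if any, each forming a singleton class, or excluded). Each source $\Lambda_k$ corresponds to exactly one such class, namely the set of qubits it emits, one per linked node; so $N_s$ equals the number of classes. Now I claim two qubits lie in the same class iff their rows in $Q_{\mathcal N}$ coincide. The row of $q_i$ is, up to the diagonal convention, the indicator vector of the set $\{q_j : q_i \sim q_j\} \cup \{q_i\}$ restricted appropriately; qubits from the same source share the same such set (the whole class), hence identical rows, while qubits from different sources have disjoint classes and therefore different rows (their rows have $1$'s in disjoint off-diagonal positions, and each row is nonzero because every source has at least two linked qubits under the GHZ assumption---or one needs a brief remark for sources with a single link, where the row is still distinguishable via the diagonal). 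Therefore the number of distinct rows of $Q_{\mathcal N}$ equals the number of equivalence classes, which is $N_s$. Finally, to conclude that $Q_{\mathcal N}$ \emph{completely characterizes} the topology, I would reconstruct the bipartite graph: read off $N_s$ as the number of distinct rows, create one source per distinct row, and connect that source to exactly the qubits (nodes) whose row equals it; the original grouping of qubits into physical measurement nodes $A_j = \{q_j : \ldots\}$ is recovered from the given indexing $q_j \in [N_q]$ of which qubits sit at which node. Any two networks satisfying \Cref{ass:all} with the same $Q_{\mathcal N}$ then yield the same reconstructed graph up to the relabeling of sources and nodes, i.e., are isomorphic in the sense of \Cref{def:isomorphism}, and conversely isomorphic networks clearly produce the same $Q_{\mathcal N}$ by \Cref{thm:VNE-interpretation,thm:MMI-interpretation}.

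The main obstacle I anticipate is the boundary bookkeeping: making the ``unique rows $\leftrightarrow$ sources'' correspondence airtight when a source could in principle be linked to only one qubit (so its GHZ state is trivial and that qubit's row is all zeros off-diagonal), and more importantly ruling out the degenerate collision where two genuinely different sources produce identical rows. The latter cannot happen when each source touches at least two nodes, because then the rows are indicator vectors of disjoint nonempty qubit-sets and are automatically distinct; I would either fold ``each source has $\ge 2$ links'' into the hypotheses (it is implicit once one discards trivial single-qubit sources, since a lone qubit from a source carries no distributed entanglement and is operationally invisible) or handle single-link sources explicitly by noting their diagonal entry is still $S(q_i)=1$ and their row has no off-diagonal $1$'s, so all such qubits would share one row---meaning the reconstruction must treat ``the all-diagonal-only row'' as possibly representing several independent single-qubit sources, a caveat worth stating. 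A secondary, more cosmetic obstacle is spelling out precisely why the \emph{measurement-node} structure (the grouping of qubits into the $A_j$'s) is considered known input rather than something to be inferred---the theorem characterizes the \emph{source} topology given that each qubit's host node is labeled---so I would make that scope explicit at the start of the proof to avoid overclaiming.
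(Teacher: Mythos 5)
Your proof takes essentially the same route as the paper's: treat each qubit as a singleton measurement node, invoke the entropic interpretations (\Cref{thm:VNE-interpretation,thm:MMI-interpretation}, equivalently \Cref{thm:noiseless-classifier}) so that the off-diagonal entries flag shared sources, and identify sources with equivalence classes of identical rows. One small quibble in your otherwise welcome edge-case discussion: a hypothetical single-link source leaves its lone qubit in a pure state, so that qubit's diagonal entry is $S(q_i)=0$ rather than $1$; aside from this, your boundary analysis goes beyond what the paper's proof addresses.
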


\begin{proof}
Consider each qubit to be its own measurement node. Then, the matrix $Q_{\mathcal N}$ completely characterizes the network by invoking \Cref{thm:noiseless-classifier}.

To see the relationship between the number of unique rows (or columns) by the symmetry of $Q_{\mathcal N}$, fix a particular source $\Lambda_k$. Then, for any row $i$ where $q_i \in \Lambda_k$, the $(i,j)$-th entry is $1$ if and only if $q_j \in \Lambda_k$. And since each qubit can only come from one source, all qubits whose corresponding rows are equal to one another are from the same source.  Therefore, partitioning the set of qubits into sets whose respective rows are equal recovers the qubit-source mapping, and the number of partitions is the number of sources.
\end{proof}

It is important to note that we don't need \Cref{subass:A1} of \Cref{ass:all} to hold to infer the topology. If we construct another matrix $B$ defined component-wise by 
\begin{align}
    B_{i,j} = \begin{cases}
    1 &\IF Q_{\mathcal N, i, j} > 0 \\
    0 &\OW
    \end{cases}
\end{align}
and treat each qubit as individual measurement nodes, then $B$ is equivalent to $Q_{\mathcal N}$ with the same entanglement structure assuming GHZ state preparation. To infer the topology, we can simply apply \Cref{thm:qubit_network_characterization} to the matrix $B$. Thus, for the remainder of the section, we use $Q_{\mathcal N}$ to refer to both the qubit characteristic matrix in \Cref{eq:qubit-char-mat} while assuming GHZ preparation and this binary matrix $B$ without the state preparation assumption as they are functionality equivalent.

\begin{figure}
    \centering
    \includegraphics[width=.8\linewidth]{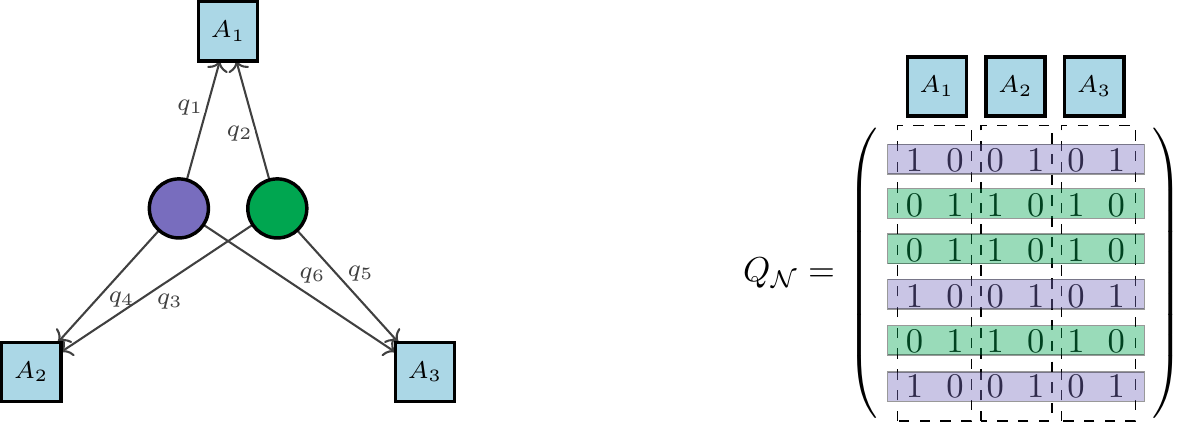}
    \caption{Application of \Cref{thm:qubit_network_characterization} on the network in \Cref{fig:tri-net-2}. The columns of $Q_{\mathcal N}$ are organized by the nodes each qubit is from, and unique rows are grouped together. Each group of unique rows corresponds to a preparation node (purple or green) and the connectivity can be found by observing the non-zero entries in each row.}
    \label{fig:qubit-infer-ex}
    
\end{figure}

The above theorem gives an algorithm for reconstructing the network topology given the qubit-wise characteristic matrix $Q_{\mathcal N}$. First, assume knowledge of the measurement node that each qubit is sent to. The columns of $Q_{\mathcal N}$, each representing a qubit, can then be grouped into their respective measurement nodes. On the other hand, the rows of $Q_{\mathcal N}$ can be partitioned into sets of indices with identical rows, that is, let $\Lambda_i$ be a set such that for all $j,k \in \Lambda_i$, $Q_{\mathcal N, j,*} = Q_{\mathcal N, k,*}$. As the notation suggests, this set of indices is the set of qubits from the source $\Lambda_i$. Lastly, if $Q_{\mathcal N,r,s} = 1$, qubits $r$ (in node $A_j$) and $s$ (in node $A_k$) share a source $\Lambda_i$ and the triplet $(A_j,\Lambda_i,A_k)$ exists in the network. In \Cref{fig:qubit-infer-ex}, we give an elementary demonstration of the algorithm on the triangle network presented in \Cref{fig:tri-net-2}.

Performing joint measurements, even for qubits received in the same measurement node, can be experimentally demanding. Thus, limiting ourselves to qubit-wise measurement actually enhances the practicality of the protocol. The fine-grained measurements provide a simple algorithm for determining the topology in time quadratic to the number of qubits. Moreover, sources are not restricted to preparing only GHZ states; any entangled states will do. Given infinite precision and noiseless channels, each entry of $Q_{\mathcal N}$ can store whether there exist correlations between qubits. As correlations can only arise from quantum entanglement, two qubits are from the same source if and only if a non-zero correlation is observed. 

Since the protocol aims at identifying zero or non-zero correlations, as long as the quantum channel is not completely destructive, one can always infer the topology given enough shots to suppress statistical noise. This statement is formalized below.

\begin{theorem}\label{thm:qubit-network-characterization-noisy}
    Consider a noisy network $\mathcal N$ that is measured using local qubit measurements and $\rhoNet = \bigotimes_{i=1}^n \rho^{\Lambda_i}$.
    Its topology is completely characterized by the matrix $Q_{\mathcal N}$ as described in \Cref{thm:qubit_network_characterization}.
    \begin{proof}
        Let $\rho_i = \tr_{j\neq i}[\rhoNet]$ for all $j\in[N_q]$. Then, if $S(\rho_i) > 0$ a source may exist that correlates the qubit with other qubits.
        Next, the measured mutual information $\mathcal I_m(q_i,q_j) = 0$, if and only if $\rho_i\otimes\rho_j$. This implies that when $I_m(q_i,q_j) > 0$, a source must be present to correlate the two independent qubit measurements.
        Therefore, the matrix $Q_{\mathcal N}$ only has nonzero elements on its off-diagonal if there exist sources to correlate the qubits.
        In practice, finite samples are taken and the scalar value of $\mathcal I_m(q_i,q_j)$ should only be counted as nonzero if it is sufficiently larger than the statistical fluctuations of uncorrelated qubits.
    \end{proof}
\end{theorem}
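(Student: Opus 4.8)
The plan is to reduce the statement to a single fact about the off‑diagonal structure of $Q_{\mathcal N}$ and then invoke the reconstruction recipe of \Cref{thm:qubit_network_characterization} verbatim. Treating each qubit as its own measurement node and writing $\rho_{ij}$ for the two‑qubit marginal of $\rhoNet$ on $q_i,q_j$, it suffices to prove that $\mathcal I_m(q_i;q_j)>0$ exactly when $q_i$ and $q_j$ are drawn from a common source (and, for the diagonal, that $S(q_i)>0$ flags a qubit belonging to a source block that is still entangled after the links act). Once that dictionary is in place, the algorithm of \Cref{thm:qubit_network_characterization}---partition the qubits by equality of rows of the binary ``correlated / uncorrelated'' matrix, then read the connectivity off the nonzero off‑diagonal entries---returns the topology, since that algorithm used only the correlation pattern and neither a GHZ nor a purity assumption.

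First I would handle the easy direction. If $q_i\in\Lambda_k$ and $q_j\in\Lambda_{k'}$ with $k\neq k'$, then because $\rhoNet=\bigotimes_{\ell}\rho^{\Lambda_\ell}$ the partial trace factorizes, $\rho_{ij}=\rho_i\otimes\rho_j$, a genuine product state. For any local product PVM $\{\Pi^{q_i}_{a_i}\otimes\Pi^{q_j}_{a_j}\}$ the Born rule then gives $\mathbb P(a_i,a_j)=\mathbb P(a_i)\,\mathbb P(a_j)$, hence $H(\mathbb P(a_i,a_j))=H(\mathbb P(a_i))+H(\mathbb P(a_j))$ and the bracket in \Cref{eq:meas-mut-info} vanishes for \emph{every} choice of bases, so $\mathcal I_m(q_i;q_j)=0$. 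This step uses only the tensor structure across sources; no positivity argument and no assumption on the prepared states is needed.

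For the converse I would isolate a two‑qubit lemma: for a state $\tau$ on $q_i,q_j$, $\mathcal I_m(q_i;q_j)>0$ iff $\tau\neq\tau_i\otimes\tau_j$. Expanding $\tau=\tfrac14\bigl(\mathbb{I}\otimes\mathbb{I}+\vec r\cdot\vec\sigma\otimes\mathbb{I}+\mathbb{I}\otimes\vec s\cdot\vec\sigma+\sum_{\mu\nu}T_{\mu\nu}\,\sigma_\mu\otimes\sigma_\nu\bigr)$, one has $\tau=\tau_i\otimes\tau_j$ precisely when $T=\vec r\,\vec s^{\,\intercal}$. If $T-\vec r\,\vec s^{\,\intercal}\neq 0$, pick unit vectors $\hat m,\hat n$ with $\hat m^{\intercal}(T-\vec r\,\vec s^{\,\intercal})\hat n\neq 0$ and measure the spin along $\hat m$ on $q_i$ and along $\hat n$ on $q_j$; the covariance of the resulting binary outcomes equals $\hat m^{\intercal}(T-\vec r\,\vec s^{\,\intercal})\hat n\neq 0$, and nonzero covariance of binary variables forces $\mathbb P(a_i,a_j)\neq\mathbb P(a_i)\mathbb P(a_j)$, giving strictly positive Shannon mutual information for that basis and hence $\mathcal I_m(q_i;q_j)>0$. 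Combining with the lemma: when $q_i,q_j$ share a source, $\rho_{ij}$ is the image under the link channels of a genuinely correlated source state, so as long as those channels are not completely destructive---that is, do not collapse this marginal to product form---we get $\rho_{ij}\neq\rho_i\otimes\rho_j$ and therefore a nonzero off‑diagonal entry of $Q_{\mathcal N}$. The diagonal is handled analogously ($S(q_i)>0$ iff $q_i$ lies in a surviving entangled block), and I would close with the finite‑sampling remark already in the statement: an estimated $\mathcal I_m(q_i;q_j)$ is recorded as nonzero only when it exceeds the sampling‑noise floor of uncorrelated qubits.

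The main obstacle is making ``not completely destructive'' both precise and strong enough. A channel could preserve the correlations of some same‑source qubit pairs while killing others, or---by a monogamy‑type effect---leave every \emph{bipartite} marginal of a source block in product form while the block stays globally correlated, in which case the pairwise matrix $Q_{\mathcal N}$ would under‑report the source partition. I would remove this gap by either (a) specializing to link noise that acts i.i.d. on each transmitted qubit through a fixed single‑qubit channel that is not a replacement channel---for GHZ / maximally entangled sources this provably keeps every bipartite marginal correlated---or (b) \emph{defining} the inferred topology to be the correlation graph that $Q_{\mathcal N}$ reads off and showing it is well defined and coincides with the source partition under hypothesis (a). The factorization step and the bilinear‑form lemma are routine; essentially all the care lives in the hypothesis on the noise.
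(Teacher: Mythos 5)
Your proposal follows the same skeleton as the paper's (very terse) proof: reduce everything to the dichotomy ``$\mathcal I_m(q_i;q_j)>0$ iff the two-qubit marginal is not a product,'' then feed the resulting binary matrix to the reconstruction recipe of \Cref{thm:qubit_network_characterization}. You add two things. First, where the paper simply asserts that $\mathcal I_m(q_i;q_j)=0$ if and only if $\rho_{ij}=\rho_i\otimes\rho_j$, your Bloch-decomposition argument---choose $\hat m,\hat n$ with $\hat m^{\intercal}(T-\vec r\,\vec s^{\,\intercal})\hat n\neq 0$, note that this quantity is the covariance of the two binary outcomes, and conclude that the joint distribution cannot factorize---is a correct, explicit proof of the nontrivial direction. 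Second, the obstacle you flag at the end is not a defect of your write-up but a genuine gap in the theorem as stated and in the paper's own proof: the paper establishes only ``nonzero off-diagonal entry $\Rightarrow$ shared source'' (no false positives) and waves at the converse with the remark that sources should prepare non-separable states. That is insufficient, because a source can be globally correlated, even genuinely multipartite entangled, while every two-qubit marginal is exactly a product state---a five-qubit absolutely maximally entangled state has all two-qubit reductions equal to $\mathbb I_4/4$, and the uniform classical mixture over even-parity three-bit strings has the same property---so $Q_{\mathcal N}$ would report no edges inside such a source block and the reconstruction would split it into singletons. Your proposed repairs (restricting to link noise that provably preserves the pairwise correlations of GHZ-type sources, or defining the inferred topology as the correlation graph that $Q_{\mathcal N}$ actually reads off) are exactly what is needed to make the statement precise; the paper instead absorbs the issue into the informal caveat that a source which ``no longer distributes shared randomness'' does not count as a source, which covers fully product noise but not the product-marginal phenomenon you describe.
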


Note that in the above Theorem, a source can be so noisy that it separates as $\bigotimes_j \rho_j$.
We argue that the source no longer qualifies as such precisely because it no longer distributes shared randomness.
Otherwise, so long as a sufficient number of measurements are taken and non-separable states are prepared at each source, the qubit-wise characteristic matrix $Q_{\mathcal N}$ is sufficient for determining the network topology.
On a different note, for certain choices of noise, such as colored noise and qubit dephasing noise, the characteristic matrix is preserved; thus, both \Cref{thm:noisy-classifier} and \Cref{thm:qubit-network-characterization-noisy} will hold under these noise models.

\paragraph{Alternative definition for mutual information on networks} 
Constructing the characteristic matrix $Q_{\mathcal N}$ requires the number of independent circuit evaluations to grow at a rate equal to the number of unique qubit-pairs. We associate a unique measurement basis that achieves the maximum of \Cref{eq:meas-mut-info} for each pair. In the case of $N_q$ being large, we might be tempted to define a measurement basis common across all qubit pairs that maximizes an analogous quantity that takes the mutual information of all pairs into account. Formalizing this intuition, we seek measurement operators $\{\Pi_{\vec x} = \bigotimes_{q_i} \Pi^{q_i}_{x_i}\}$ such that 
\begin{align}
    \max_{\{\Pi_{\vec x}\}} \sum_{i < j} H(\mathbb P(x_i)) + H(\mathbb P(x_j)) - H(\mathbb P(x_i, x_j)) \label{eq:classical-mi}
\end{align}
and we let the mutual information between any two qubits obtained from such basis be the \emph{classical mutual information} (as opposed to the original \emph{bipartite} measured mutual information).

Comparing \Cref{eq:classical-mi} with the expression for measured mutual information in \Cref{eq:meas-mut-info}, we can see that
\begin{align}
    &\max_{\{\Pi_{\vec x}\}} \sum_{i < j} H(\mathbb P(x_i)) + H(\mathbb P(x_j)) - H(\mathbb P(x_i, x_j)) \nonumber \\ 
    &\leq \sum_{i < j} \max_{\{\Pi_{x_i}^{q_i} \tensor \Pi_{x_j}^{q_j}\}} H(\mathbb P(x_i)) + H(\mathbb P(x_j)) - H(\mathbb P(x_i, x_j)).
\end{align}
Thus, for each unique pair of qubits, measured mutual information between the qubit pair is at least that of the one calculated from the classical mutual information of the whole network. We can construct an example where the inequality is, in fact, strict.

\begin{remark}\label{thm:network-vs-bipartite}
Consider a tripartite network that has the following mixed state:
\begin{align}
\rho = \frac{1}{8} \biggr( &\mathbb I \tensor \mathbb I \tensor \mathbb I + \frac{1}{2} \sigma_x \tensor \sigma_x \tensor \mathbb I \nonumber \\
&+ \frac{1}{2} \sigma_y \tensor \mathbb I \tensor \sigma_y + \frac{1}{2} \mathbb I \tensor \sigma_z \tensor \sigma_z \biggr).
\end{align}
For this network, the measurement basis that maximizes the correlation---mutual information and covariance (see next section) alike---between any pair of qubits is orthogonal to that of any other pair. Since the bipartite measured mutual information consists of a collection of measurement bases, one for each pair, correlations can be fully observed. In the case of classical mutual information, in which we seek to find one optimal basis for all qubit pairs, there will always be correlations that are not fully observed.
\end{remark}

On the other hand, the benefit of considering the network mutual information is clear: the number of circuit evaluations needed is constant with respect to the number of qubits.

\paragraph{Covariance-based topology inference}

The measured mutual information is not the only measure of the correlation between two random variables. Inspired by the literature~\cite{Kela2020_semidefinite_tests,Aberg2020_semidefinite_tests,Kraft2021_characterizing_quantum_networks}, we propose a network topology inference protocol by applying the decoding scheme above on covariance matrices.

We treat each qubit measurement as a random variable taking values in $\{-1, 1\}$. From the definition of covariance, we get
\begin{align}
    &\Cov(q_i, q_j) \nonumber \\ 
    &= \sum_{x, y \in \{-1,+1\}} x y ~\mathbb P(q_i = x, q_j = y) - \bar q_i \bar q_j \\
    &= \sum_{x, y \in \{-1, +1\}} x y ~ \tr \left( \left(\Pi_x^{q_i} \tensor \Pi_y^{q_j}\right)  \rho_{q_i q_j} \right) - \bar q_i \bar q_j
\end{align}
where $\rho_{q_i q_j}$ is the state of qubits $i$ and $j$ and 
\begin{align}
    \bar q_i &= \sum_{x \in \{-1, +1\}} x~\mathbb P(q_i = x) \\
    &= \sum_{x \in \{-1,+1\}} x~\tr \left( \Pi_x^{q_i}~\tr_j(\rho_{q_i}) \right),
\end{align}
and similarly for $\bar q_j$. The qubit covariance matrix, denoted by $C_{\mathcal N}$, is defined entry-wise where the $(i,j)$-th entry contains the absolute value of the covariance between qubits $i$ and $j$.

The magnitude of the covariance depends on the measurement basis. If the basis is chosen arbitrarily, there is a chance that the basis of choice yields low covariance, thereby skewing the inference. To reduce the probability of such an event from occurring, we can find the basis that maximizes the distance between the covariance matrix and the origin, for example, for some choice of local measurement basis $\{\Pi_{\vec x} = \bigotimes_{q_i} \Pi^{q_i}_{x_i}\}$, we hope to find
\begin{align}
    \arg \max_{\{ \Pi_{\vec x} \}} ~\tr \left( C_{\mathcal N}^\dagger C_{\mathcal N} \right).
\end{align}
However, note that \Cref{thm:network-vs-bipartite} still applies and there exist correlations that cannot be fully observed from the covariance matrix alone. However, this does not interfere with inferring the network topology.

From here, the procedure for inferring the network topology is identical to that of the mutual information-based protocol. We provide a binary description for each pair of qubits in a network---$1$ if the covariance is above zero, and $0$ (or within an acceptable statistical margin of error) otherwise. If two qubits are entangled, then their covariance is nonzero as long as the two qubits are not measured in mutually unbiased bases, for example, $\sigma_x$ and $\sigma_z$. Thus, the protocol straightforwardly applies by replacing the qubit characteristic matrix $Q_{\mathcal N}$ with the covariance matrix $C_{\mathcal N}$. 

\begin{remark} 
The 
covariance methods presented in \cite{Kela2020_semidefinite_tests,Aberg2020_semidefinite_tests,Kraft2021_characterizing_quantum_networks} aim at constructing a single covariance matrix after collecting classical measurement data. In comparison, our protocol sequentially queries the network and finds the optimal covariance matrix  that maximizes the correlation observed among all pairs of qubits. Our method avoids the possibility of choosing a measurement basis that leads to a small observable correlation, though at the cost of requiring more extensive procedures like variational optimization.
\end{remark}

\subsection{Comparing entropic and covariance-based protocols}

While the mutual information and covariance both quantify the amount of correlation between qubits, each method possesses unique attributes that make one more preferable than the other depending on the context. In the following section, we will compare the two methods in terms of their computational complexity, the ability to accurately infer the topology under noisy quantum channels, and under statistical noise from taking a finite number of measurements. We will also demonstrate the capability of both methods on quantum hardware.

\paragraph{Computational complexity}

The number of circuit evaluations needed to evaluate the covariance matrix is constant in the number of qubits. Constructing the characteristic matrix using classical mutual information needs twice the number of evaluations as the diagonal and off-diagonal entries must be constructed independently. Lastly, in the case of using the bipartite measured mutual information, the number of circuit evaluations grows quadratically in the number of qubits. One can optimize this to achieve linear scaling by evaluating all non-overlapping pairs, but is still time-consuming compared to the two constant-time alternatives.

The time complexity needed to compute the cost functions ranks similarly. Computing the distance induced by the Schmidt norm of the covariance matrix from the origin requires scanning through each entry of the matrix, resulting in a quadratic scaling with respect to the number of qubits. Computing mutual information requires specifying the marginal distribution of all qubit pairs. One can think of the measurement outcome as a probability distribution with a domain that grows exponentially with the number of qubits. Obtaining the marginal distribution for a pair of qubits requires summing over exponentially many terms, and therefore is a time-consuming procedure. In the case of measured mutual information, one can choose to query one pair of qubits at a time, sacrificing query complexity for ease of computing the cost function.

\paragraph{Inference under noisy quantum channels}\label{sec:noisy}

\begin{figure}
    \centering
    \includegraphics[width=.75\linewidth]{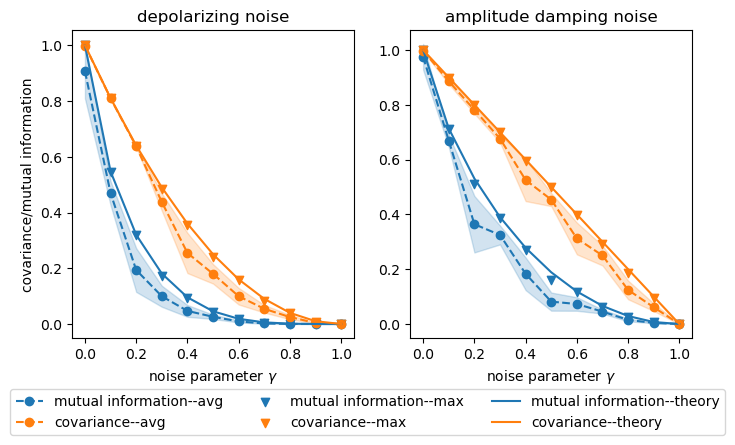}
    \caption{The effect of noisy quantum channels on the observed correlation. A Bell state with fixed local rotations was prepared, and we applied \textbf{(left)} depolarizing noise to each qubit and \textbf{(right)} amplitude-damping noise to each qubit using \texttt{PennyLane}'s mixed-state simulator. We ran 10 independent trials of variational optimization with step sizes of $0.05$ for $30$ steps, plotting the average (circles with shaded standard error) and maximum (triangle) observed across trials. We also compare empirical value obtained through simulation with theory (solid lines), cf.~\Crefrange{eq:depolarizing-mi}{eq:depolarizing-cov} and \Crefrange{eq:ampdamp-mi}{eq:ampdamp-cov}}
    \label{fig:channel-noise}
\end{figure}

Though conceptually identical, covariance is generally less affected by noisy quantum channels compared to mutual information. Consider a Bell state undergoing depolarizing channels applied qubit-wise. We can then write the mutual information and covariance analytically in terms of the noise parameter $\gamma \in [0, 1]$.
\begin{align}
    I_m(q_1;q_2) &= \frac{1 + (1-\gamma)^2}{2} \log(1 + (1-\gamma)^2) \nonumber \\
    &\quad + \frac{\gamma(2-\gamma)}{2} \log(\gamma (2-\gamma)), \label{eq:depolarizing-mi}\\
    \Cov(q_1,q_2) &= (1-\gamma)^2 \label{eq:depolarizing-cov}.
\end{align}
The same can be done for amplitude-damping noise applied onto each qubit, which has Kraus operators
\begin{align}
    K_0 = \begin{pmatrix} 1 & 0 \\ 0 & \sqrt{1-\gamma} \end{pmatrix}, ~~ K_1 = \begin{pmatrix} 0 & \sqrt{\gamma} \\ 0 & 0 \end{pmatrix}
\end{align}
for each qubit, and noise parameter $\gamma \in [0,1]$. The maximal correlation is observed when measuring both qubits in the $\sigma_x$ basis, which respectively takes the form
\begin{align}
    I_m(q_1;q_2) &= \frac{2-\gamma}{2} \log (2-\gamma) + \frac{\gamma}{2} \log \gamma, \label{eq:ampdamp-mi}\\
    \Cov(q_1, q_2) &= 1 - \gamma \label{eq:ampdamp-cov}.
\end{align}
More detailed calculations are deferred to \Cref{sec:noisy-channel-calc}.
We compare these theoretic quantities with numerical solutions obtained via variational optimization. We repeated 10 independent optimization trials per choice of noise parameter, and the results are shown in \Cref{fig:channel-noise}.

Both analytical and numerical results show that covariance decays slower with respect to the noise parameter $\gamma$ in the case of depolarizing and amplitude-damping noise. This suggests a noise regime where mutual-information-based protocols misidentify the existence of entanglement while  the covariance matrix can accurately infer the network topology. On a more optimistic note, we observe that the respective theoretical values can be achieved numerically given an ensemble of independent optimization trials, and a typical optimization run produces a result that approximates the theoretical.

\paragraph{Inference under finite shot noise}

In addition to noise induced by quantum channels, statistical shot noise is another practical source of uncertainty. In particular, having computational methods that can operate in the low-shot regime greatly increases its applicability as quantum resources are often scarce. Shot error can pose an issue for iterative methods such as variational optimization as noisy gradient estimates can limit the ability to effectively navigate the optimization landscape. We put our inference protocol to the test with a 5-qubit network where the first three qubits are entangled in a W state, i.e., the $W$ state takes the form
\begin{align}\label{eq:w_state}
    \ket{W} = \frac{1}{\sqrt{3}} \left( \ket{001} + \ket{010} + \ket{100} \right),
\end{align}
and the remaining two in a GHZ state. The optimization result for qubit pairs with and without entanglement is shown in \Cref{fig:finite-shot}.

\begin{figure*}
    \centering
    \includegraphics[width=\linewidth]{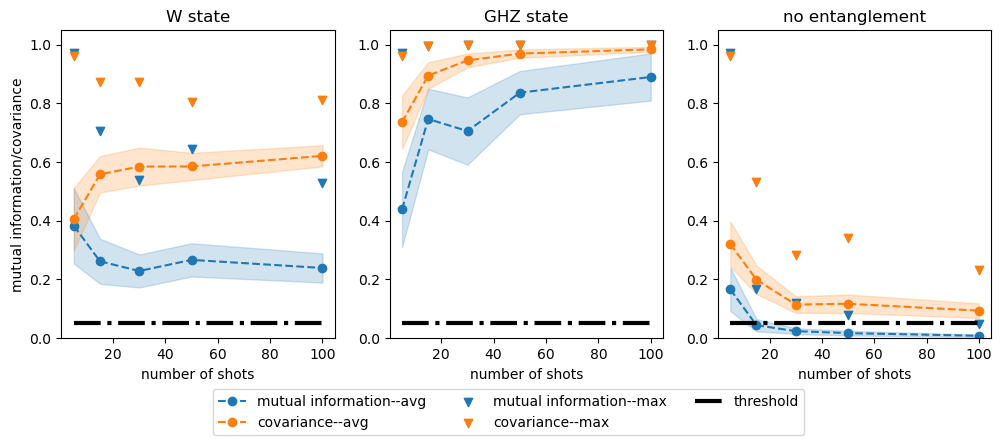}
    \caption{The effect of statistical shot noise on calculating correlations under various prepared states. Mutual information and covariance are estimated using variational optimization with a step size of $0.05$ for 30 steps. Each plot shows the correlation between qubit pairs under different entanglement structures. The threshold value (black line) is set to $0.05$, and was chosen arbitrarily. From the experiment, we observe that covariance methods accurately identify the existence of entanglement, as shown in the case of \textbf{(left)} W states (\Cref{eq:w_state}) and \textbf{(middle)} GHZ states (\Cref{eq:ghz}). However, entropic methods are able to identify \textbf{(right)} the lack of entanglement better than covariance-based protocols.}
    \label{fig:finite-shot}
\end{figure*}

Consistent with the results in \Cref{fig:channel-noise}, covariance methods identify the presence of entanglement more strongly than entropic measures. However, covariance tends to ``overshoot'' and falsely declares the presence of entanglement for uncorrelated qubits. We can set an arbitrary threshold for detecting correlations---we decide that two qubits are entangled if and only if the calculated correlation (mutual information or covariance) is above the defined threshold. In the case of figure \Cref{fig:finite-shot}, the threshold is set to 0.05, indicated by the black line. When qubits are not entangled, mutual information rapidly vanishes while covariance stays above the threshold within the range of the shot counts considered. Thus, in the low-shot regime, mutual information might be more suitable as it is more likely to correctly reconstruct the entanglement structure of the underlying network.

\paragraph{Hardware experiments}

\begin{figure}[t!]
    \centering
    \includegraphics[width=0.95\textwidth]{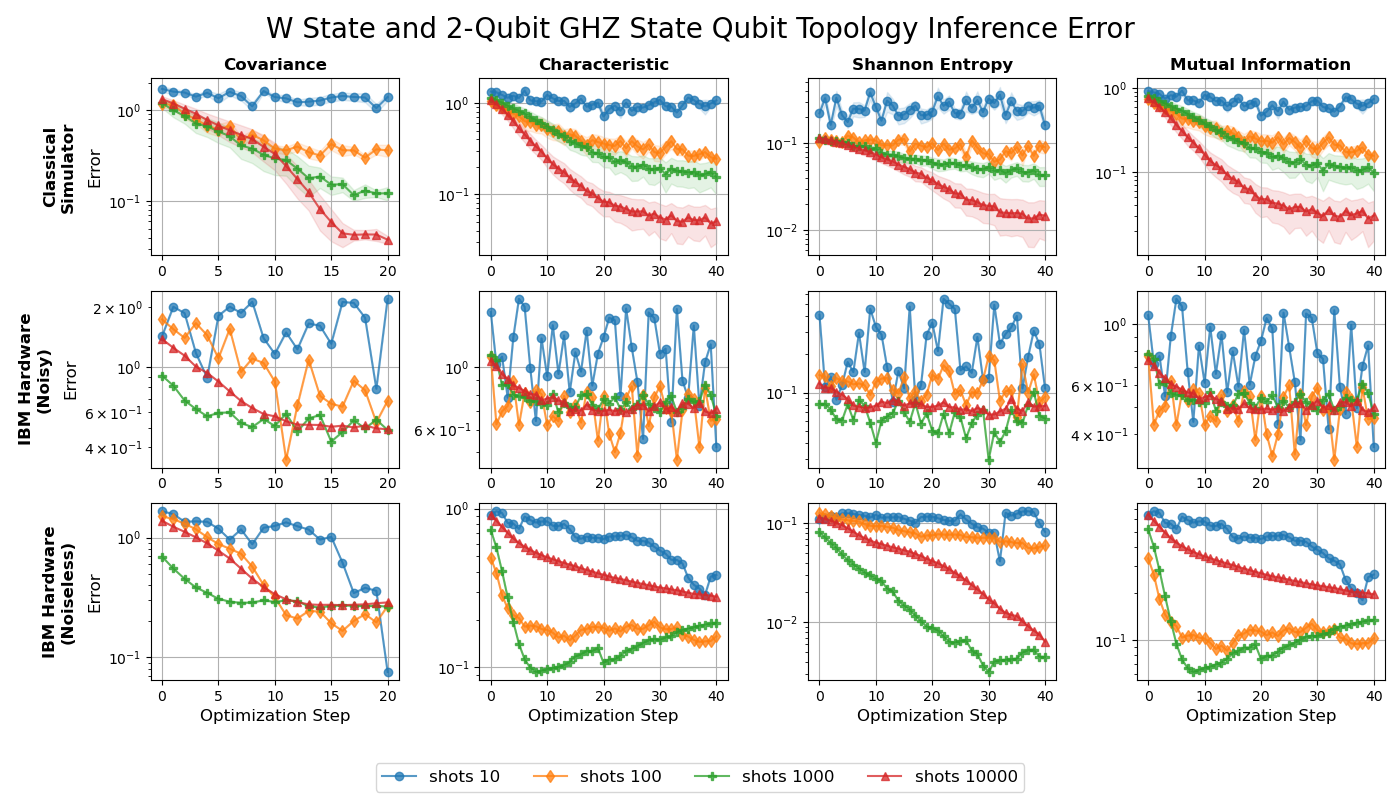}
    \caption{Variational quantum optimization of the covariance matrices and characteristic matrices for a W state and two-qubit GHZ state prepared on IBM Hardware. From left to right each column plots the covariance matrix optimization, the characteristic matrix optimization, the Shannon entropy minimization, and the classical mutual information maximization. Note that the Shannon entropy and mutual information optimizations are combined to construct the characteristic matrix. In each plot the blue circles show the 10-shot optimization, the orange diamonds show the 100-shot optimization, green plus signs show the 1,000-shot optimization, and red triangles show the 10,000-shot optimization. The $x$-axis shows the optimization step while the $y$-axis shows the inference error calculated as the Euclidean distance between the ideal covariance/characteristic matrix and the matrix in each optimization step. The first row shows the optimization data averaged over 10 runs on a finite-shot noiseless classical simulator, the second row shows the optimization data collected from the \texttt{ibmq\_belem} quantum computer, and the third row shows the data collected when the settings from the noisy IBM hardware optimization are reevaluated on a noiseless classical simulator.}
    \label{fig:w_state_ibm_hardware_inference_optimization}
\end{figure}

\begin{figure}[b!]
    \centering
    \includegraphics[width=0.95\textwidth]{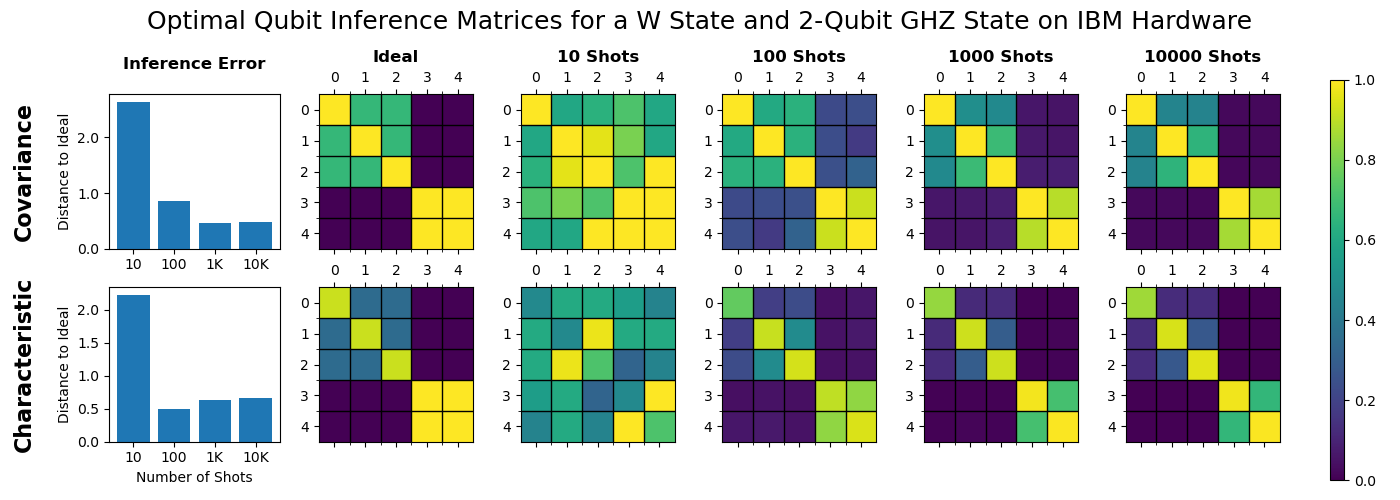}
    \caption{In the top row, we plot the maximal value achieved for each term of the covariance matrix across all optimization steps. In the bottom row, we plot the maximal mutual information (off-diagonals) and minimal Shannon entropies (diagonal) across all optimization steps. The bar graph plots the Euclidean distance between the ideal and inferred matrices for each distinct number of shots.}
    \label{fig:w_state_ibm_optimal_inference}
\end{figure}

We now apply on IBM's quantum hardware our variational scheme for inferring network topology using local qubit measurements. As an example we again consider the 5-qubit state preparation $\ket{\psi} = \ket{W}\otimes \ket{\Phi}$ where the $\ket{W}$ is the W state in \Cref{eq:w_state} and $\ket{\Phi} = \frac{1}{\sqrt{2}}(\ket{00} + \ket{11})$ is a two-qubit maximally entangled state. We also apply our variational network inference scheme to 5-qubit GHZ states and the 5-qubit zero state (see \Cref{appendix:ibm_inference_data_5-qubit_zero_and_ghz}). In each example, a known state is prepared while we optimize a variational ansatz that parameterizes arbitrary qubit projective measurements as $\Pi_x = U^\dagger(\vec{\theta})\op{x}{x} U(\vec{\theta})$ where $\vec{\theta}\in \mathbb{R}^3$. We then compare the performance of our variational inference scheme on both noisy IBM hardware and noiseless classical simulator when 10, 100, 1,000, and 10,000 shots are considered.

To quantify the performance of the optimization, we consider the inference error, which we define as the Euclidean distance
\begin{equation}\label{eq:euclidean_distance_inference_error}
    d(C,C^\star) = \sqrt{\tr\left[ \left( C^\star - C \right)^\intercal \left( C^\star - C \right) \right]}
\end{equation}
where $C$ is the optimized covariance matrix and $C^\star$ is the ideal covariance matrix for the given state. The distance in \Cref{eq:euclidean_distance_inference_error} can similarly  quantify the error in the characteristic matrix $d(Q,Q^\star)$. Note that this inference error quantifier only works in our numerical experiment because we prepare a known state. In practice, the state preparation is not known and the performance of the optimization cannot be quantified by \Cref{eq:euclidean_distance_inference_error}.

For the considered state $\ket{\psi} = \ket{W}\otimes\ket{\Phi}$, the ideal qubit covariance and characteristic matrices are
\begin{equation}\label{eq:ideal_w_state_ghz_state_inference_matrices}
    C^\star = \begin{pmatrix}
        1 & \frac{2}{3} & \frac{2}{3} & 0 & 0 \\
        \frac{2}{3} & 1 & \frac{2}{3} & 0 & 0 \\
        \frac{2}{3} & \frac{2}{3} & 1 & 0 & 0 \\
        0 & 0 & 0 & 1 & 1 \\
        0 & 0 & 0 & 1 & 1 \\
    \end{pmatrix}\!, Q^\star = \begin{pmatrix}
        S_W & I_W & I_W & 0 & 0 \\
        I_W & S_W & I_W & 0 & 0 \\
        I_W & I_W & S_W & 0 & 0 \\
        0 & 0 & 0 & 1 & 1 \\
        0 & 0 & 0 & 1 & 1 \\
    \end{pmatrix}
\end{equation}
where $I_W \equiv \mathcal{I}_m(q_i;q_j) \approx 0.349976$ and  $S_W \equiv S(W^{q_i}) \approx 0.918296$ for any of the qubits $q_i$ and $q_j$ of the state $\ket{W}$. To obtain the ideal covariance matrix and mutual information in \Cref{eq:ideal_w_state_ghz_state_inference_matrices}, it is sufficient to measure all qubits in the $\{\ket{+},\ket{-}\}$ basis. While this calculation is straightforward for the $\ket{\Phi}$ state, we will be more explicit with the $W^{ABC} = \op{W}{W}$ state, for which the reduced density matrices are
\begin{align}\label{eq:vn_entropy_w_state_qubit}
    W^A = \tr_{BC} \left( W^{ABC} \right) = \frac{2}{3}\op{0}{0} + \frac{1}{3}\op{1}{1}
\end{align}
and 
\begin{align}
    W^{AB} = \tr_C \left( W^{ABC} \right) = \frac{1}{3} \ket{00}{00} + \frac{2}{3} \op{\Psi^+}{\Psi^+}
\end{align}
where $\ket{\Psi^+} = (\ket{01} +\ket{10})/\sqrt{2}$. When the observable $\sigma_x$ is measured, the covariance and variance are
\begin{align}
    \text{Cov}(A,B) = \tr \left( (\sigma_x \otimes \sigma_x)~W^{AB} \right) = \frac{2}{3}   
\end{align}
and $\text{Var}(A) = 1$. The qubit von Neumann entropy can then be explicitly calculated from \Cref{eq:vn_entropy_w_state_qubit} as 
\begin{align}
    S_W \equiv S(W^A) = -\frac{2}{3}\log \frac{2}{3}  - \frac{1}{3} \log \frac{1}{3} \approx 0.918, 
\end{align}
while the measured mutual information is found to be $I_W = \mathcal{I}_m(A,B) \approx 0.349976$ where each party measures the observable $\sigma_x$. 
 
Since all qubits are measured in the same basis, the measured mutual information is equivalent to the classical mutual information for all qubit pairs, that is $\mathcal{I}_m(q_i,q_j) = I(q_i,q_j)$. Thus, for this example, it is sufficient to optimize the classical bipartite mutual information rather than the measured mutual information. This simplification provides significant speed-ups when running on quantum hardware because only one circuit needs to be evaluated to collect the mutual information of all qubit pairs as opposed to the 10 circuits needed to optimize the measured mutual information for each qubit pair independently. Note that queue times for the IBM hardware are the main bottleneck in our variational network inference scheme.

In \Cref{fig:w_state_ibm_hardware_inference_optimization}, we plot the inference error as the covariance and characteristic matrices are optimized with respect to the state $\ket{\psi} = \ket{W}\otimes \ket{\Phi}$. To investigate the relation between the number of shots and the inference error, we consider optimizations where 10, 100, 1,000, and 10,000 shots are used when collecting data from the quantum computer. We expect that the inference error should decrease as the number of shots increases.

As a baseline, we first run our numerical experiment on \texttt{PennyLane}'s \texttt{default.qubit} classical simulator, which is a noiseless, finite-shot simulation of a quantum computer. The data shown in the top row of \Cref{fig:w_state_ibm_hardware_inference_optimization} plots the mean inference error from 10 independent optimizations for both the covariance and characteristic matrices. As expected, the amount of inference error decreases as the number of shots increases.  Furthermore, we find that the covariance and characteristic matrices do not always find a global optimum in which all matrix terms converge to their maximal theoretical values. As a result, the mean does not approach zero, reflecting the importance of running the optimization algorithm multiple times. Moreover, we find that the covariance matrix is optimized in fewer iterations than the characteristic matrix.

We then test our optimization on the \texttt{ibmq\_belem} 5-qubit quantum computer, which exhibits a considerable amount of noise.  When we run the optimization on an IBM quantum computer, the optimization steps become expensive to run due to the queue wait times. As a result, we are only able to run one optimization for each number of shots. We plot the IBM hardware optimization results in the middle row of \Cref{fig:w_state_ibm_hardware_inference_optimization}. For the 10-shot case, it is not clear that the inference error is decreasing. For larger numbers of shots, the inference error decreases throughout the optimization. To further validate the optimization results, we take the settings optimized on the noisy IBM hardware and re-evaluate them on a noiseless, infinite-shot classical simulator. Overall, there is a general improvement in the inference error as shown in the bottom row of \Cref{fig:w_state_ibm_hardware_inference_optimization} when compared with the data collected from the noisy IBM hardware. Although the 1,000-shot case seems to be an exception with the error increasing slightly when evaluating the same parameters on the simulator, it is worth noting that the optimal parameters on noisy hardware might differ slightly from optimal on a simulator; hence, we view this run as mostly a statistical outlier. As further confirmed by additional experimental results in the appendix, while the data from the quantum hardware may seem noisy, the optimization over qubit measurements is indeed decreasing the inference error.

In \Cref{fig:w_state_ibm_optimal_inference} we show the optimal covariance and characteristic where the optimal matrix values are taken over all optimization steps. As the number of shots increases, so does the optimization's ability to resolve the correlation structure with greater accuracy. On the other hand, when the number of shots is small, statistical fluctuations can lead to stronger correlations than present, leading to false-positive correlations. For example, in the 10-shot case of \Cref{fig:w_state_ibm_optimal_inference} the zero terms of the covariance and characteristic matrices are optimized to be quite large, which would lead a researcher to infer that the two qubits are correlated. Also note that the IBM hardware can have significant errors on a given qubit for example, in the 10,000 shot characteristic matrix the top left matrix term remains close to zero despite taking more shots than other more successful trials. From experiments on the simulator, we know that mutual information generally is more difficult to optimize than covariance. Moreover, a noisy device might not be stable for the amount of time needed to acquire 10,000 shots. This instability leads to noise that inhibits the optimization's ability to make progress toward an optimum, particularly so for computing the mutual information. Therefore, unless one finds oneself working in the low-shot regime, covariance-based method is generally preferred as it exhibits more reliable convergence when presented with adequate resources.

\section{Conclusion}

In this work, we introduced protocols for distinguishing and inferring the topology of $n$-local quantum networks. The protocols construct matrices that encode entanglement structures, i.e., the characteristic matrix (cf.~\Cref{eq:char-mat,eq:qubit-char-mat}) and the covariance matrix. The entries of these matrices can be estimated using only local measurements, which allows for easy implementation on quantum hardware. Assuming sources prepare GHZ states, the characteristic matrix can uniquely determine the topology of a quantum network. Moreover, if one is capable of making qubit-wise measurements, the topology of the network can be inferred in polynomial time from both the characteristic matrix and the covariance matrix. Furthermore, the protocol is robust to noise and can be implemented on quantum hardware using quantum variational optimization. From experiments on both classical simulators and quantum hardware, we found that covariance-based methods are generally more stable during optimization, leading to more reliable discoveries of quantum entanglement. However, with limited shots, entropic protocols are more effective at avoiding false positives.

It is worth noting that the characteristic matrix cannot distinguish between quantum entanglement and shared randomness.
However, the characteristic matrix does indicate which qubits are correlated.
Thus, an entanglement witness can be tailored to the network's topology.
One approach might be to then test each source independently using an entanglement witness of choice~\cite{Terhal2002detecting,Guhne2009entanglement_detection}.

Future work should also focus on relaxing assumptions made in the manuscript for   even broader applications. For example, we assumed all measurement nodes are observed and measurements can be performed on all nodes. In reality, the known set of measurement nodes can merely be a subset of the entire network. Exploring the limits and extensions of our protocol for the case of partially-observed networks can be a fruitful future direction. On the other hand, the protocol for distinguishing network topology (cf.~\Cref{sec:distinguish-network}) relies on preparing the GHZ state, which is delicate and lacks robustness in practice. Extending the analysis to sources potentially distributing states with greater entanglement robustness, such as the W-class states~\cite{Vidal-1999a, Dur-2000a}, cluster states, or other partially entangled states would increase the applicability of this protocol. Another direction is to extend the protocol to infer network topology in more complex scenarios, for example, networks with intermediate processing nodes between source and measurement nodes, or communication between measurement devices/sources.

\section*{Code Availability}

Our data, numerical methods, and software tools are publicly available on GitHub \cite{qnetti_2023}.

\section*{Acknowledgements}

This material is based upon work supported by the U.S. Department of Energy, Office of Science, National Quantum Information Science Research Centers, and 
the Office of Advanced Scientific Computing Research, Accelerated Research for Quantum Computing program under contract number DE-AC02-06CH11357.  We acknowledge the use of IBM Quantum services for this work. The views expressed are those of the authors, and do not reflect the official policy or position of IBM or the IBM Quantum team.

\printbibliography

\vfil
\framebox{\parbox{.90\linewidth}{\scriptsize The submitted manuscript has been created by UChicago Argonne, LLC, Operator of Argonne National Laboratory (``Argonne''). Argonne, a U.S.\ Department of Energy Office of Science laboratory, is operated under Contract No.\ DE-AC02-06CH11357.  The U.S.\ Government retains for itself, and others acting on its behalf, a paid-up nonexclusive, irrevocable worldwide license in said article to reproduce, prepare derivative works, distribute copies to the public, and perform publicly and display publicly, by or on behalf of the Government.  The Department of Energy will provide public access to these results of federally sponsored research in accordance with the DOE Public Access Plan \url{http://energy.gov/downloads/doe-public-access-plan}.}}

\appendix 
\section{Proof of \Cref{thm:MMI-interpretation}}\label{sec:pf-mmi-interpretation}

In order to show \Cref{thm:MMI-interpretation}, we use the following established results reviewed in \Cref{thm:ent-lb} and \Cref{thm:ent-lb-2}. 

\begin{lemma}\label{thm:ent-lb}
Let $\sigma_n$ be an $n$-qubit shared uniform random bit, and let the classical probability distribution upon measuring $\sigma_n$ be $\mathbb P(\vec a)$. Then, the following inequality is true for any measurement basis,
\begin{align}
    H(\mathbb P(\vec a)) \geq 1
\end{align}
with equality occurring when measured in the computational basis.
\end{lemma}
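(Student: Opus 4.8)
The plan is to reduce the statement to the variational characterization of the von Neumann entropy already recorded in \Cref{eq:vn_entropy_as_min_shannon_entropy}. First I would compute $S(\sigma_n)$ directly: the density matrix $\sigma_n = \tfrac12\left(\op{0\cdots0}{0\cdots0} + \op{1\cdots1}{1\cdots1}\right)$ is already written in diagonal form, with eigenvalues $\tfrac12, \tfrac12$ and $0$ with multiplicity $2^n - 2$, so $S(\sigma_n) = -\tfrac12\log\tfrac12 - \tfrac12\log\tfrac12 = 1$.

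For the inequality itself, I would invoke the fact from the preliminaries (following Nielsen--Chuang, and encapsulated in \Cref{eq:vn_entropy_as_min_shannon_entropy}) that measuring a state in any basis can only raise the Shannon entropy of the outcome distribution above the von Neumann entropy. Concretely, for every complete projective measurement $\{\Pi_{\vec a}\}$ one has $H(\mathbb P(\vec a)) \geq \min_{\{\Pi_{\vec a}\}} H(\mathbb P(\vec a)) = S(\sigma_n) = 1$, which is exactly the claimed bound. For the equality claim, I would observe that the computational basis $\{\ket{\vec a}\}_{\vec a \in \{0,1\}^n}$ contains the two eigenvectors $\ket{0\cdots0}$ and $\ket{1\cdots1}$ carrying the nonzero eigenvalues of $\sigma_n$; hence measuring $\sigma_n$ in the computational basis yields $\mathbb P(0\cdots0) = \mathbb P(1\cdots1) = \tfrac12$ and $\mathbb P(\vec a) = 0$ for all other $\vec a$, so $H(\mathbb P(\vec a)) = 1$ and the bound is attained.

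I do not expect a genuine obstacle here: the only point requiring care is that the ``measurement only adds noise'' step should be cited rather than re-derived, since its proof passes through the doubly-stochastic (majorization) relationship between the outcome distribution $(\mathbb P(\vec a))$ and the spectrum of $\sigma_n$ together with the Schur-concavity of Shannon entropy. If a fully self-contained argument were wanted, that majorization argument is the mild subtlety to spell out; but as \Cref{eq:vn_entropy_as_min_shannon_entropy} is already established in the text, the lemma follows immediately from the two observations above.
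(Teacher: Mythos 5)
Your proposal is correct and follows the same route as the paper's proof: both argue that the Shannon entropy of any measurement is bounded below by the von Neumann entropy (attained in an eigenbasis), identify the computational basis as an eigenbasis of $\sigma_n$, and note the resulting coin-flip distribution gives entropy $1$. Your version is simply more explicit about the spectrum of $\sigma_n$ and the majorization fact underlying the ``measurement only adds noise'' step, which the paper leaves to the citation.
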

\begin{proof}
    Recall that the entropy of the measurement of a state is minimized when measured in its eigenbasis~\cite{Nielsen2009}. An eigenbasis of $\sigma_n$ is the computational basis, which behaves like a classical coin flip upon measuring. Thus, the Shannon entropy is lower bounded by $1$.
\end{proof}

\begin{lemma}\label{thm:ent-lb-2}
Consider the distribution acquired through local measurements on a Bell state, that is 
\begin{align}
    \mathbb P(\vec a) = \bra \Phi \Pi_{a_1} \tensor \Pi_{a_2} \ket \Phi
\end{align}
for projective operators $\{\Pi_{a_1}\}$ and $\{\Pi_{a_2}\}$. Then, for any choice of $\Pi_{a_1}$ and $\Pi_{a_2}$, 
\begin{align}
    H(\mathbb P(\vec a)) \geq 1.
\end{align}
where the equality holds when measured in the computational basis.
\end{lemma}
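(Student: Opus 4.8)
The goal is to lower-bound the Shannon entropy of the outcome distribution obtained by measuring a Bell state with arbitrary local projective measurements, and to identify equality with the computational basis.

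My plan is to reduce this to \Cref{thm:ent-lb} by exploiting the structure of the Bell state under local unitaries. Write the two local measurements as $\Pi_{a_1} = U_1^\dagger \op{a_1}{a_1} U_1$ and $\Pi_{a_2} = U_2^\dagger \op{a_2}{a_2} U_2$ for some single-qubit unitaries $U_1, U_2$. Then the outcome distribution satisfies
\begin{align}
    \mathbb P(a_1, a_2) = \bra \Phi (U_1^\dagger \op{a_1}{a_1} U_1) \tensor (U_2^\dagger \op{a_2}{a_2} U_2) \ket \Phi = \bra{\Phi'} \op{a_1}{a_1} \tensor \op{a_2}{a_2} \ket{\Phi'},
\end{align}
where $\ket{\Phi'} = (U_1 \tensor U_2)\ket\Phi$. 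The key algebraic fact I would invoke is the ``transpose trick'': $(U_1 \tensor U_2)\ket{\Phi} = (U_1 U_2^\intercal \tensor \mathbb I)\ket{\Phi}$ (up to the usual convention on $\ket\Phi = (\ket{00}+\ket{11})/\sqrt 2$). Hence $\ket{\Phi'} = (V \tensor \mathbb I)\ket\Phi$ for the single unitary $V = U_1 U_2^\intercal$, so measuring $\ket{\Phi'}$ in the computational basis is the same as measuring $(V\tensor \mathbb I)\ket\Phi$ in the computational basis, i.e.\ measuring $\ket\Phi$ with the first qubit rotated by $V$ and the second unrotated.

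From here I would argue in one of two equivalent ways. The cleanest: the reduced state of $\ket\Phi$ on either qubit is maximally mixed, so $\mathbb P(a_1) = \mathbb P(a_2) = 1/2$ regardless of basis, giving $H(\mathbb P(a_1)) = H(\mathbb P(a_2)) = 1$. Subadditivity of Shannon entropy gives $H(\mathbb P(a_1,a_2)) \le H(\mathbb P(a_1)) + H(\mathbb P(a_2)) = 2$, which is the wrong direction, so instead I use that conditioning reduces entropy is also the wrong direction; the right tool is that $\ket{\Phi'}$ is a pure state, so $H(\mathbb P(a_1,a_2)) \ge S(\rho_{\Phi'}) = 0$ is again too weak. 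The correct route is: because $\ket{\Phi'}$ is a maximally entangled pure state, the outcome $a_1$ is perfectly correlated with $a_2$ precisely when $V = \mathbb I$, and more generally the joint distribution is $\mathbb P(a_1,a_2) = \tfrac12 |\bra{a_1} V \ket{a_2}|^2$; summing the marginal over $a_2$ confirms uniformity, and then $H(\mathbb P(a_1,a_2)) = H(\mathbb P(a_1)) + H(\mathbb P(a_2 \mid a_1)) = 1 + \sum_{a_1}\tfrac12 H(|\bra{a_1}V\ket{\cdot}|^2) \ge 1$, with equality iff each conditional distribution $a_2 \mapsto |\bra{a_1}V\ket{a_2}|^2$ is deterministic, i.e.\ iff $V$ is (a permutation times a diagonal phase), which corresponds exactly to measuring in the computational basis up to relabeling outcomes.

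The main obstacle is bookkeeping the equality condition cleanly: "measured in the computational basis" must be interpreted up to a local relabeling of outcomes and up to the phase/permutation freedom in $V$, since e.g.\ measuring both qubits in $\{\ket+,\ket-\}$ also achieves $H = 1$. I would handle this by stating the equality claim as "there exists a choice of bases — in particular the computational basis — achieving $H = 1$," matching the phrasing in \Cref{thm:ent-lb}, rather than claiming the computational basis is the unique minimizer. The rest is routine: the transpose trick and the marginal computation are standard, and the entropy chain-rule inequality $H(a_1,a_2) \ge H(a_1)$ plus uniformity of $\mathbb P(a_1)$ does all the work.
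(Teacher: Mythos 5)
Your proposal is correct and, once the detours are stripped away, rests on exactly the same mechanism as the paper's proof: the joint Shannon entropy is at least the entropy of either marginal, and the marginal of a Bell state under any projective measurement is a uniform bit, so $H(\mathbb P(\vec a)) \ge H(\mathbb P(a_1)) = 1$. The transpose-trick computation of $\mathbb P(a_1,a_2) = \tfrac12|\bra{a_1}V\ket{a_2}|^2$ is extra machinery the paper does not use, but it does let you state the equality condition more precisely than the paper's one-line argument.
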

\begin{proof}
    By subadditivity~\cite{Nielsen2009}, the Shannon entropy of measuring $\ket \Phi$ is at least the entropy of its marginal, which is $\sigma_1$. Since $\sigma_1$ is a classical coin flip, it has a Shannon entropy of $1$ and the entropy of $\ket \Phi$ is lower bounded by $1$. 
\end{proof}

Now, we present the proof for \Cref{thm:MMI-interpretation} below. 

\begin{proof}
Recall the definition of measured mutual information in \Cref{eq:meas-mut-info}. We must first determine the basis to measure in at each node, that is $\Pi^{A_i}_{\vec a_i}$ and $\Pi^{A_j}_{\vec a_j}$. 
Moreover, for any two nodes $A_i$ and $A_j$, the qubits received can be either maximally entangled or independent (maximally mixed). By \subassref{all}{A1}, the measurement basis does not influence the entropies at each node. Thus, we can achieve the lower bounds in \Crefrange{thm:ent-lb}{thm:ent-lb-2} by measuring in the computational basis.

Knowing the basis of choice, we proceed to understand the graph-theoretic properties of the mutual information. Let $N_s^{A_i}$ be the number of sources connected to device $A_i$. Then, we know that $H(\mathbb P(\vec a_i)) = N_s^{A_i}$ and similarly with $H(\mathbb P(\vec a_j))$ where $\mathbb P(\vec a_i)$ and $\mathbb P(\vec a_j)$ are probability distributions upon measuring at node $A_i$ and $A_j$ respectively. 

For the joint entropy $H(\mathbb P(\vec a_i, \vec a_j))$, partition $A_i \cup A_j$ into sets: 
\begin{align}
    S_1 &= \{(q_k, q_\ell) : q_k \in A_i, q_\ell \in A_j, \{q_k, q_\ell\} \subseteq \Lambda_i \text{ for some } i\} \\
    S_2 &= \{q_k : \forall q_\ell, \{q_k, q_\ell\} \not \in S_1\}
\end{align}
Each pair of qubits in $S_1$ are entangled and act jointly as a fair coin flip. On the other hand, each qubit in $S_2$ acts independently of one another also like a fair coin flip. Moreover, each element in $S_1$ and $S_2$ is independent of each other and Shannon entropies are additive. Thus, $H(\mathbb P(\vec a_i, \vec a_j)) = |S_1| + |S_2|$. Since the total number of qubits is $2|S_1| + |S_2| = H(\mathbb P(\vec a_i)) + H(\mathbb P(\vec a_j))$, the joint entropy can be expressed as
\begin{align}
    H(\mathbb P(\vec a_i, \vec a_j)) = H(\mathbb P(\vec a_i)) + H(\mathbb P(\vec a_j)) - |S_1|.
\end{align}
By definition of $S_1$, we find the measured mutual information to be
\begin{align}
    \mathcal I_m (A_i;A_j) = |S_1| = N_s^{A_i,A_j}. 
\end{align}
\end{proof}

\section{Proof of \Cref{thm:noiseless-classifier}}\label{sec:pf-noiseless-classifier}
\begin{proof}
For sufficiency, observe that if two networks have the same topology, then the number of sources connected to $A_i \ind 1$ is the same as $A_i \ind 2$, and the number of sources shared between $A_i \ind 1$ and $A_j \ind 1$ is the same as $A_i \ind 2$ and $A_j \ind 2$. By \Cref{thm:VNE-interpretation,thm:MMI-interpretation}, the von Neumann entropy of each node and measured mutual information of pairs of nodes are the same.

We now show necessity. Suppose the von Neumann entropy at each measurement node and the measured mutual information between any pair of measurement nodes for the two networks are identical.

First, we note that the two networks must have the same number of nodes; an immediate contradiction is reached otherwise. On the other hand, the two networks will also have the same number of links, which can be written alternatively using the von Neumann entropy as $N_\ell = \sum_{i} S(A_i)$
using \subassref{all}{A1}. Lastly, the number of sources must also be the same. Suppose not, and $N_s \ind 1 < N_s \ind 2$. Since the number of links present in either network is the same, there must be at least one link $\ell$ connected to the preparation node in $\mathcal N \ind 1$ that is not connected to the same preparation node in $\mathcal N \ind 2$. This means that the measurement device $A_i$ that was connected to $A_j$ via $\ell$ in $\mathcal N \ind 1$ must have lost the connection to $A_j$ in $\mathcal N \ind 2$. Thus, the $\mathcal I_m (A_i \ind 1;A_j \ind 1) > \mathcal I_m (A_i \ind 2; A_j \ind 2)$ and we've reached a contradiction.

Now, focus on the case where the two networks have the same number of sources, links, and nodes. Define a \textit{triplet} in a network to be a tuple of three elements, $(A_i, \Lambda_k, A_j)$, such that $A_i$ and $A_j$ share $\Lambda_k$. For two $n$-local quantum networks, we construct the map $\xi$ such that:
\begin{enumerate}[label=(\roman*)]
    \item $\xi ( (A_i \ind 1, \Lambda_k \ind 1, A_j \ind 1) ) = (A_i \ind 2, \Lambda_k \ind 2, A_j \ind 2)$ where $(A_i \ind 1, \Lambda_k \ind 1, A_j \ind 1)$ is in $\mathcal N \ind 1$ and $(A_i \ind 2, \Lambda_k \ind 2, A_j \ind 2)$ is in $\mathcal N \ind 2$, and
    \item performing the map $\xi$ on all triplets $t_n$ in $\mathcal N \ind 1$ yields $\mathcal N \ind 2$. 
\end{enumerate}
Think of $\xi$ as the map that ``moves'' $\mathcal N \ind 1$ to $\mathcal N \ind 2$. In particular, when a triplet is present in both networks, we take $\xi$ to be the identity; when a triplet is only present in one network, then $\xi$ moves the corresponding triplet into a new location. This restriction is important since this removes the case that $\xi$ cyclically moves edges, e.g., $(1,1,2) \mt (2,1,3)$, $(2,1,3) \mt (3,1,1)$, and $(3,1,1) \mt (1,1,2)$.

This map can induce relabeling of indices. Let $\phi: [N_s] \to [N_s]$ be the map that relabels the source indices, and $\varphi: [N_m] \to [N_m]$ be the one for measurement nodes. Then, we can can define $\phi$ and $\varphi$ using $\xi$: if $\xi(A_i \ind 1, \Lambda_k \ind 1, A_j \ind 1) = (A_i \ind 2, \Lambda_k \ind 2, A_j \ind 2)$, then define
\begin{align}
    \phi(\Lambda_k \ind 1) = \Lambda_k \ind 2, ~~ \varphi(A_i \ind 1) = A_i \ind 2, ~~ \varphi(A_j \ind 1) = A_j \ind 2.
\end{align}
If $\phi$ and $\varphi$ are well-defined, then they are immediately bijective because we're comparing networks of equal sizes---nodes have to be mapped to (onto) and two nodes cannot be ``squished'' and become indistinguishable (one-to-one). Therefore, we will show that if $\xi$ preserves the characteristic matrix, then $\phi$ and $\varphi$ are well-defined and the two networks have the same topology. We do both by contradiction.

If $\phi$ is not well-defined, then there are $i,j,m,n$ ($i \neq m$) such that
\begin{align}
    \xi(A_i \ind 1, \Lambda_k \ind 1, A_j \ind 1) = (A_i \ind 2, \Lambda_k \ind 2, A_j \ind 2)    
\end{align}
but 
\begin{align}
    \xi(A_m \ind 1, \Lambda_k \ind n, A_n \ind 1) = (A_m \ind 2, \Lambda_\ell \ind 2, A_n \ind 2)    
\end{align}
for some $\ell \neq k$. However, this means that connections between $A_i \ind 1$ and $A_m \ind 1$ through $\Lambda_k \ind 1$ are moved away upon applying $\xi$ and 
\begin{align}
    \mathcal I_m (A_i \ind 2, A_m \ind 2) \leq \mathcal I_m (A_i \ind 1, A_m \ind 1) - 1,
\end{align}
which contradicts the assumption that $\xi$ preserves the measured mutual information.

On the other hand, if $\varphi$ is not well-defined, then there are $j,k,\ell,m$ ($k \neq \ell$) such that 
\begin{align}
    \xi(A_i \ind 1, \Lambda_k \ind 1, A_j \ind 1) = (A_i \ind 2, \Lambda_k \ind 2, A_j \ind 2)    
\end{align}
but 
\begin{align}
    \xi(A_i \ind 1, \Lambda_\ell \ind 1, A_n \ind 1) = (A_m \ind 2, \Lambda_\ell \ind 2, A_n \ind 2)    
\end{align}
for some $m \neq i$. However, this means that the connection to $A_i \ind 1$ from $\Lambda_\ell \ind 1$ is moved away so 
\begin{align}
    S(A_i \ind 1) \leq S(A_i \ind 2) - 1
\end{align}
which contradicts the assumption that $\xi$ preserves the Shannon entropy. Thus, if two networks have the same characteristic matrix, then the two networks have the same topology.

\end{proof}

\section{Mutual information and covariance of channel noise} \label{sec:noisy-channel-calc}
Let $\rho$ be the density matrix of the Bell state. we hope to calculate the mutual information and covariance between the two qubits after applying depolarizing and amplitude-damping noise independently on each qubit. We approach the calculation the same way. We first write down the density operator through the respective quantum channels, $\mathcal E(\rho)$, via the Kraus operator formalism. Then, we will find the classical distribution induced by measurement in the basis that maximizes the correlation quantity of interest.

\paragraph{Depolarizing noise} Quantum channel that applies qubit-depolarizing noise can be written as the following summation:
\begin{align}
    \mathcal E(\rho) = \sum_{i, j} (K_i \tensor K_j) \rho (K_i \tensor K_j)^\dagger \label{eq:kraus-op}
\end{align}
where $\{K_i\}$ are the Kraus operators of a single-qubit depolarizing channel
\begin{align}
K_1 = \sqrt{\frac{3\gamma}{4}} \mathbb I_2,~~ &K_2 = \sqrt{\frac{\gamma}{4}} \sigma_x, \nonumber \\
~~ &K_3 = \sqrt{\frac{\gamma}{4}} \sigma_y, ~~ K_4 = \sqrt{\frac{\gamma}{4}} \sigma_z.
\end{align}
Expanding the summation, the quantum state $\mathcal E(\rho)$ can be compactly written as
\begin{align}
    \mathcal E(\rho) = \frac{\gamma(2-\gamma)}{4} \mathbb I_4 + (1-\gamma)^2 \rho.
\end{align}
Measuring the above state in the $\sigma_z$ basis yields the following probability distribution:
\begin{align}
    \mathbb P(x_1x_2) = 
    \begin{cases}
        (1+(1-\gamma)^2)/4 & \IF~x_1 = x_2 \\
        \gamma(2-\gamma)/4 & \OW.
    \end{cases}
\end{align}
The remainder calculation follows the definition of measured mutual information and covariance, which will lead to \Crefrange{eq:depolarizing-mi}{eq:depolarizing-cov}.

\paragraph{Amplitude damping} We follow a similar procedure in the case of applying amplitude-damping noise. Again, we can write the noisy quantum state similar to \Cref{eq:kraus-op} but with Kraus operators
\begin{align}
    K_1 = \begin{pmatrix} 1 & 0 \\ 0 & 1-\gamma \end{pmatrix}, ~~K_2 = \begin{pmatrix} 0 & \sqrt{\gamma} \\ 0 & 0 \end{pmatrix}.
\end{align}
Expanding the summation yields the following density matrix:
\begin{align}
\mathcal E (\rho_\Phi) = \frac{1}{2} \begin{pmatrix} 1 + \gamma^2 & 0 & 0 & 1-\gamma \\ 0 & \gamma(1-\gamma) & 0 & 0 \\ 0 & 0 & \gamma(1-\gamma) & 0 \\ 1-\gamma & 0 & 0 & (1-\gamma)^2 \end{pmatrix},
\end{align}
and measuring in the $\sigma_x$ basis---the basis that maximizes the observed correlation---is equivalent to measuring the state
\begin{align}
\frac{1}{4} \begin{pmatrix} 2-\gamma & \gamma & \gamma & 2\gamma^2 - 3\gamma + 2 \\ \gamma & \gamma & 2\gamma^2-\gamma & \gamma \\ \gamma & 2\gamma^2-\gamma & \gamma & \gamma \\ 2\gamma^2 - 3\gamma + 2 & \gamma & \gamma & 2 - \gamma \end{pmatrix} 
\end{align}
in the computational basis, that is, the induced classical distribution is encoded in the diagonal. Following definitions of mutual information and covariances respectively again yields \Crefrange{eq:ampdamp-mi}{eq:ampdamp-cov}.

\section{Inference of 5-Qubit Zero State and 5-Qubit GHZ State IBM Hardware}\label{appendix:ibm_inference_data_5-qubit_zero_and_ghz}

\begin{figure}[b!]
    \centering
    \includegraphics[width=0.95\textwidth]{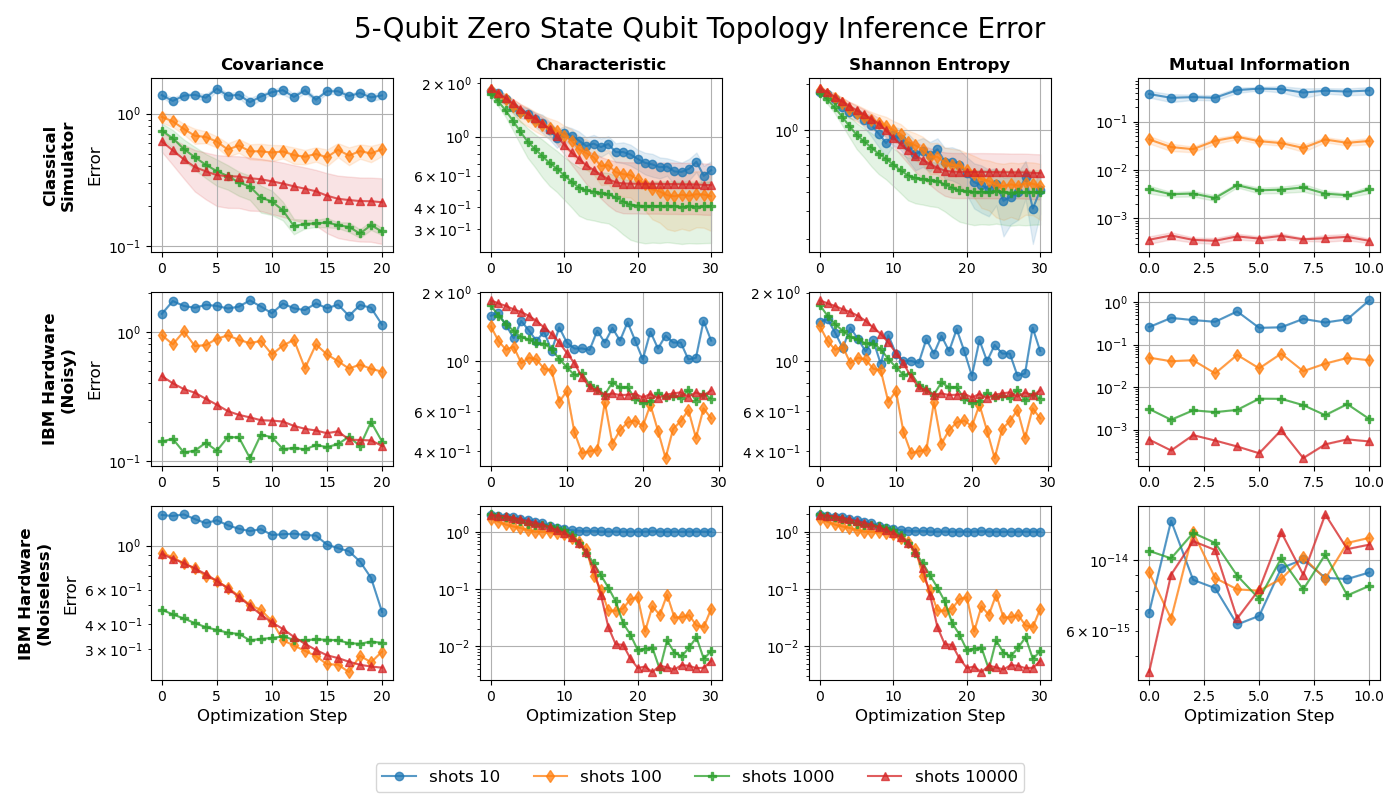}
    \caption{Variational quantum optimization of the covariance matrices and characteristic matrices for a 5-qubit zero state prepared on IBM Hardware. In each plot the blue circles show the 10-shot optimization, the orange diamonds show the 100-shot optimization, green plus signs show the 1,000-shot optimization, and red triangles show the 10,000-shot optimization. The $x$-axis shows the optimization step while the $y$-axis shows the inference error calculated as the Euclidean distance between the ideal covariance/characteristic matrix and the matrix in each optimization step. The first row shows the optimization data averaged over 10 runs on a finite-shot noiseless classical simulator, the second row shows the optimization data collected from the \texttt{ibmq\_belem} quantum computer, and the third row shows the data collected when the settings from the noisy IBM hardware optimization are reevaluated on a noiseless classical simulator. From left to right each column shows the covariance matrix optimization, the characteristic matrix optimization, the Shannon entropy optimization, and the classical mutual information optimization.}
    \label{fig:5-qubit_zero_state_ibm_hardware_inference_optimization}
\end{figure}

\begin{figure}[t!]
    \centering
    \includegraphics[width=0.95\textwidth]{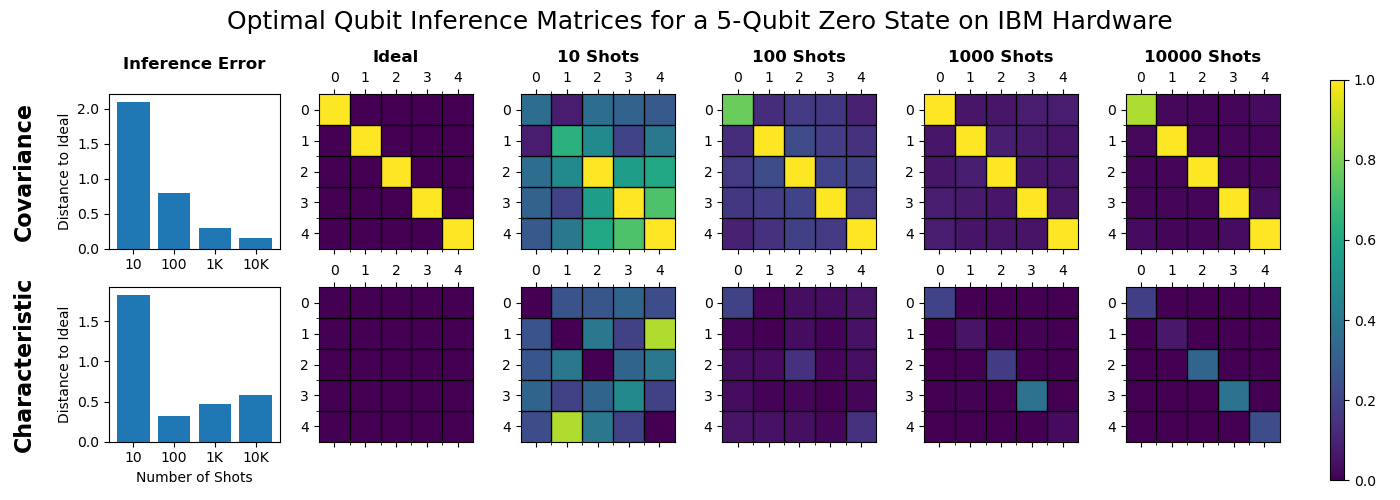}
    \caption{In the top row, we plot the maximal value achieved for each term of the covariance matrix across all optimization steps. In the bottom row, we plot the maximal mutual information of each qubit pair (off-diagonals) and minimal Shannon entropy of each qubit (diagonal) across all optimization steps. The bar graph plots the Euclidean distance between the ideal and inferred matrices for each distinct number of shots.}
    \label{fig:5-qubit_zero_state_ibm_optimal_inference}
\end{figure}

\begin{figure}[b!]
    \centering
    \includegraphics[width=0.95\textwidth]{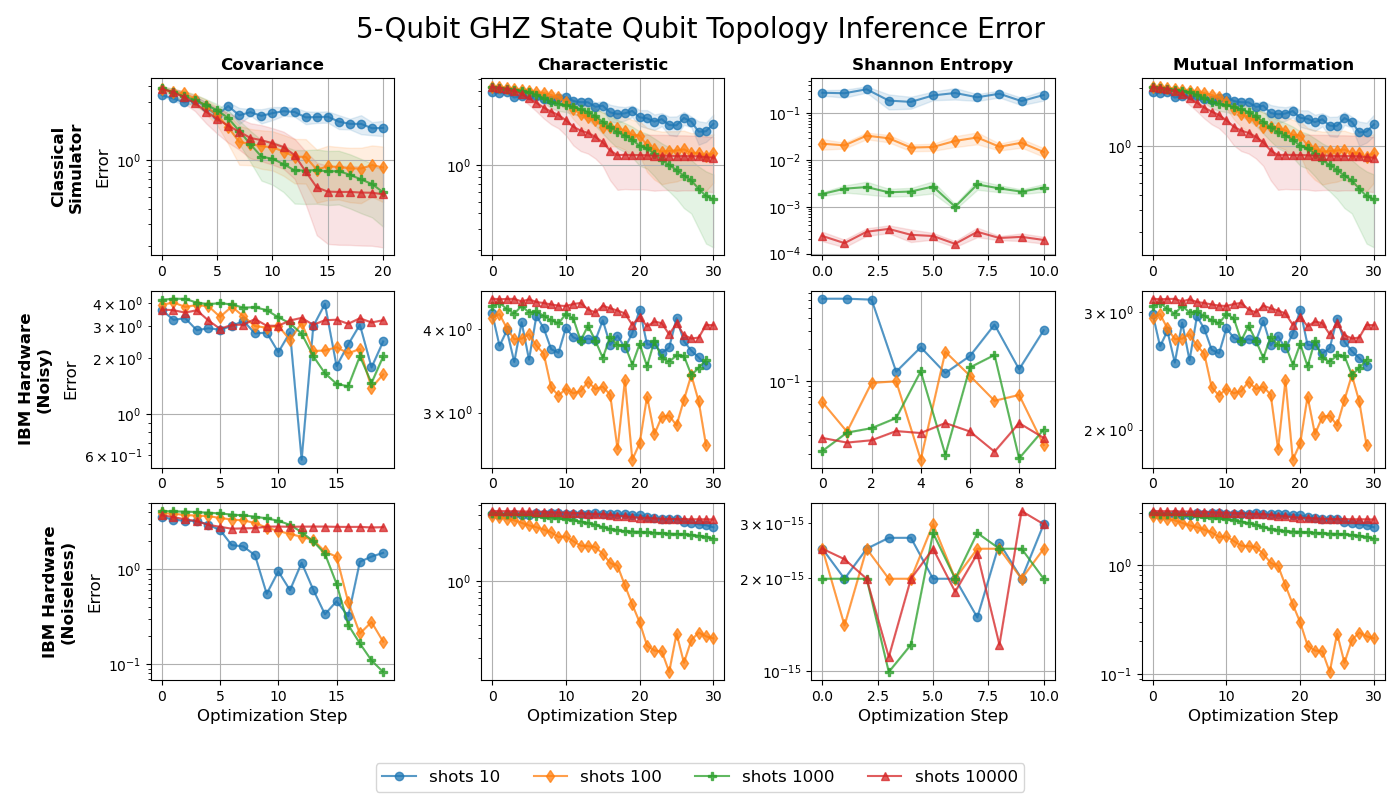}
    \caption{Variational quantum optimization of the covariance and characteristic matrices for a 5-qubit GHZ state prepared on IBM Hardware. See \Cref{fig:5-qubit_zero_state_ibm_hardware_inference_optimization} for descriptions of individual plots.}
    \label{fig:5-qubit_ghz_state_ibm_hardware_inference_optimization}
\end{figure}

In this section, we present the remainder of the results from our numerical experiments on IBM quantum hardware. Namely, we show the inference error when our variational scheme is applied to a 5-qubit zero state and a 5-qubit GHZ state. The former isolates the optimization of the von Neumann entropy, while the latter isolates the optimization of covariance and mutual information. We follow the same approach used for the inference of the $\ket{W}\otimes \ket{\Phi}$ as discussed in the main body of the paper.

In \Cref{fig:5-qubit_zero_state_ibm_hardware_inference_optimization} and \Cref{fig:5-qubit_zero_state_ibm_optimal_inference} we plot the network inference for a 5-qubit zero state preparation $\ket{\psi} = \ket{00000}$. Since $\ket{\psi}$ is separable, for any two-qubit measurement $\Pi^{q_i}\otimes \Pi^{q_j}$, the mutual information $I(q_i; q_j)=0$ and  the covariance  is $\text{Cov}(q_1,q_2) = 0$. As a result, randomly initialized measurements are optimal as shown in the simulated results in the top-right plot of \Cref{fig:5-qubit_zero_state_ibm_hardware_inference_optimization}. On the other hand, the variance of any qubit is $\text{Var}(q_i)\in [0,1]$ and the Shannon entropy $H(\mathbb{P}(a_i)) \in [0,1]$ where the maximal variance and minimal Shannon entropy are both achieved only when the state $\op{0}{0}$ is measured in the computational basis. Therefore, randomly initialized measurements must be optimized to achieve the optimal covariance matrix and characteristic matrix, which are the $5\times5$ identity and zero matrix, respectively.

As the number of shots increases we expect to see the network inference error decrease. This trend can easily be seen in the classical simulator data in the top row of \Cref{fig:5-qubit_zero_state_ibm_hardware_inference_optimization}. First, note in the rightmost column that a 10x increase in the number of shots corresponds to roughly a 10x decrease in the amount of error. In the covariance, characteristic, and Shannon entropy optimizations, we find a similar pattern in which more shots leads to less error, however, the separation is not as profound as in the mutual information case because the covariance and Shannon entropy optimizations over measurements have some error. This error is reflected in the standard error bar on the classical simulator optimization, which plots the average optimization over 10 independent optimization runs.  When we run the same optimizations on the IBM hardware, we find a similar trend where increasing the number of shots decreases the error.

In \Cref{fig:5-qubit_zero_state_ibm_optimal_inference}, we visualize the network inference performance by taking the optimal matrix elements for the covariance and characteristic matrices across the whole optimization on IBM hardware. For the covariance-based inference case, we find that the inference error decreases as the number of shots increases. This trend does not hold for the characteristic matrix, which shows the error increases when more than 100 shots are considered. We largely attribute this feature to the noise on the IBM hardware, which is quite dynamic and can vary considerably over the course of an optimization.

\begin{figure}[t!]
    \centering
    \includegraphics[width=0.95\textwidth]{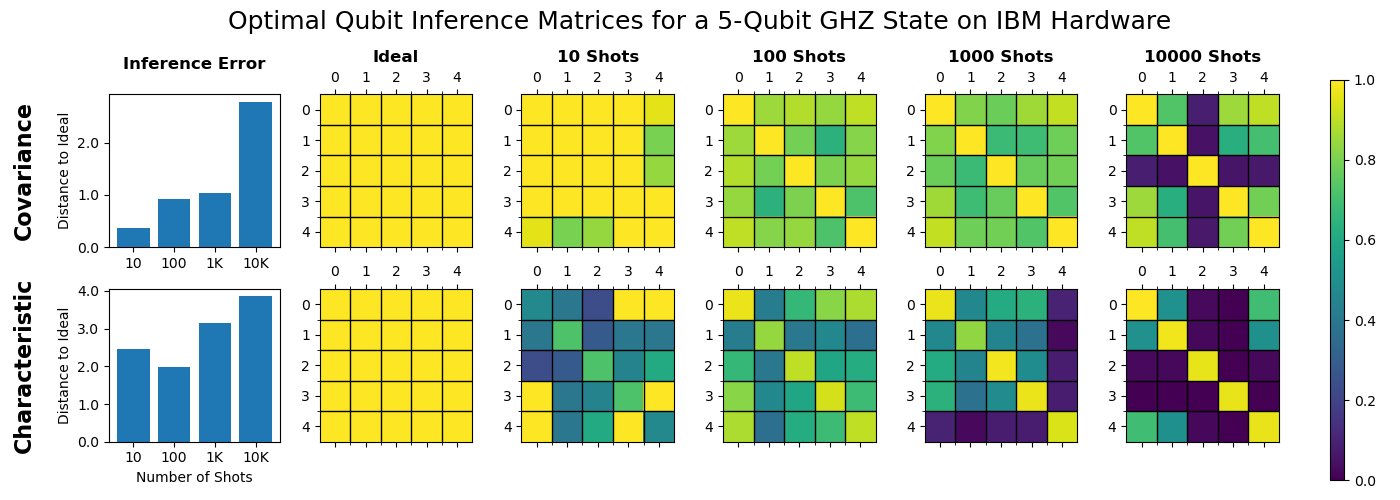}
    \caption{In the top row, we plot the maximal value achieved for each term of the covariance matrix across all optimization steps. In the bottom row, we plot the maximal mutual information of each qubit pair (off-diagonals) and minimal Shannon entropy of each qubit (diagonal) across all optimization steps. The bar graph plots the Euclidean distance between the ideal and inferred matrices for each distinct number of shots.}
    \label{fig:5-qubit_ghz_state_ibm_optimal_inference}
\end{figure}

In \Cref{fig:5-qubit_ghz_state_ibm_hardware_inference_optimization} and \Cref{fig:5-qubit_ghz_state_ibm_optimal_inference} we plot the network inference for a 5-qubit GHZ state preparation $\ket{\psi} = \frac{1}{\sqrt{2}}(\ket{00000}+\ket{11111})$. For any two-qubit measurement $\Pi^{q_i}\otimes \Pi^{q_j}$, the mutual information $I(q_i; q_j)\in[0,1]$ and covariance  $\text{Cov}(q_i,q_j) \in[0,1]$ where these values are maximal when $\Pi^{q_i} = \Pi^{q_j}$ are the same measurement. As a result, randomly initialized measurements must be optimized to obtain their maximal values. On the other hand, the variance of any qubit is $\text{Var}(q_i)=1$ and the Shannon entropy $H(\mathbb{P}(a_i)) =1$ for any choice of measurement, hence randomly initialized measurements are optimal. Note that in the Shannon entropy plots of \Cref{fig:5-qubit_ghz_state_ibm_hardware_inference_optimization}. The ideal qubit covariance and characteristic matrices are both $5\times5$ matrices of ones. 

As the number of shots increases we expect to see the network inference error decrease. This trend can easily be seen in the classical simulator data in the top row of \Cref{fig:5-qubit_ghz_state_ibm_hardware_inference_optimization}. For instance, note that for the Shannon entropy optimization, a 10x increase in the number of shots corresponds to roughly a 10x decrease in the inference error, which is a result of the Shannon entropy being constant for all measurements. For the other optimizations, the 10,000-shot case does not necessarily perform the best, but this can largely be attributed to finding local optima during optimization, causing the mean inference error to level off with wide error bars.  When we run the same optimizations on the IBM hardware, we find that more shots do not necessarily decrease the inference error. We attribute this feature to the fact the IBM quantum hardware is noisy. Furthermore, the noise is not necessarily constant over the course of an optimization run.  Thus, we suspect that considering too many shots allowed the noise to drift during optimization skewing the optimization. Furthermore, the numerical experiments were run serially from fewer shots to more shots, meaning that the performance of the IBM device might have also deteriorated considerably after the completion of the 100-shot experiment.

The noise in the IBM hardware is visualized in \Cref{fig:5-qubit_ghz_state_ibm_optimal_inference}. For both the covariance and characteristic matrices, we find the inference error to increase as the number of shots increases. First, this trend results from the fact that we take the maximal covariance and mutual information for each qubit pair, thus, in the 10-shot case statistical fluctuations lead to a larger covariance and mutual information, implying less error. Second, we observe in the 1,000 and 10,000-shot cases that the inference scheme fails for certain qubits. In these cases, either local optima are being found, or the IBM hardware is failing to produce entangled states. From our classical simulator data, we find that characteristic matrix optimizations are particularly susceptible to finding local optima, meaning that the results in \Cref{fig:5-qubit_ghz_state_ibm_optimal_inference} could simply be the optimization finding local optima. In practice, these errors can be mitigated by running the optimization repeatedly from different initial settings.

\end{document}